\documentclass[12pt]{article}

\usepackage[margin=1in]{geometry}
\usepackage[T1]{fontenc}
\usepackage[utf8]{inputenc}
\usepackage{lmodern}
\usepackage{amsmath,amssymb,amsthm}
\usepackage{hyperref}
\usepackage{graphicx}
\usepackage{enumitem}
\usepackage{xcolor}
\usepackage{setspace}
\usepackage{titlesec}
\usepackage{placeins}

\hypersetup{
 colorlinks=true,
 linkcolor=blue,
 urlcolor=blue,
 citecolor=blue
}

\titleformat{\section}{\large\bfseries}{\thesection.}{0.5em}{}
\titleformat{\subsection}{\normalsize\bfseries}{\thesubsection.}{0.5em}{}

\newtheorem{theorem}{Theorem}[section]
\newtheorem*{theoremrestated}{Theorem~\ref{thm:pairlaw}, restated}

\newtheorem{corollary}{Corollary}

\newtheorem {conjecture}{Conjecture}

\title{The Fully Depolarizing Noise Conjecture for Entangled Physical States: A Twenty-Year Perspective}
\author{Gil Kalai}

\begin {document}

\maketitle

\begin{abstract}
In this paper I revisit my 2006 conjecture on correlated errors in entangled
physical qubits, originally proposed as a potential obstruction to quantum
fault tolerance. The conjecture asserts that, in any physical implementation
of a quantum computer, the effective noise channel acting on entangled physical
qubits contains a joint fully depolarizing component, with a rate comparable
to that of two-qubit gate errors. This hypothesized structural constraint goes
beyond standard noise models and, if valid, would pose a significant challenge
to scalable quantum fault tolerance.

The conjecture remains open, but recent advances in experimental quantum
computing bring it within reach of empirical testing on current devices. I also
discuss two related directions in my critical study of quantum computation: the
role of noise sensitivity and computational complexity in noisy
intermediate-scale quantum systems, and the statistical analysis of experimental
claims of quantum advantage. Finally, since this paper is written for a volume
honoring Yuri Gurevich, I include some reflections on the ways in which my
scientific and personal trajectory became intertwined with Yuri's.
\end{abstract}

\section{Introduction}

I have long held the view that computationally superior quantum computers, and
even early milestones toward this goal, are inherently impossible. Moreover, I
expect the inherent obstacles to quantum computation to be manifested already
in quantum devices with a small number of qubits. In this paper I revisit a
conjecture from 2006 about the nature of noise in entangled physical states,
and I discuss this conjecture, my subsequent work, and my skepticism regarding
quantum computers from a broader perspective.

A model of computation in the physical world must come with an error model. It
is known, and rather easy to prove, that if engineers could reduce the error
per operation to an arbitrarily small level, roughly of order one over the
number of gates in the computation, then noisy quantum circuits could
approximate ideal noiseless circuits well enough for algorithmic purposes.
However, most physicists and computer scientists regard the assumption that
the noise level can be reduced to arbitrarily small values, even values
polynomially small in the number of qubits, as physically unrealistic.

The threshold theorem \cite {AhaBen97,KLZ98,Kit97}  is a landmark in the theory of quantum computation. It
asserts, roughly speaking, that fault-tolerant quantum computation is possible
provided that the noise rate is below a certain positive constant and that the
noise satisfies suitable locality and weak-correlation assumptions. Since the
late 1990s, several researchers have questioned whether these assumptions
accurately describe physical quantum systems. Some argued that the independence
or weak-correlation assumptions are unrealistic; others argued that the
required noise levels are far beyond what physics will allow.

My own work between 2005 and 2013 studied models of correlated errors. The
research question I asked was: what kind of law for correlated errors could
cause quantum error correction to fail? My proposal was to formulate a
constraint on the error model according to which errors on entangled physical
qubits are positively correlated.

A fully depolarizing error on a pair of qubits replaces their joint state by
the completely random two-qubit state, thereby erasing the quantum information
stored in the pair and its correlations with the rest of the system. Such two-qubit fully depolarizing errors are among the standard local noise
components used in formulations and simulations of quantum fault tolerance, and they are also
a familiar effective component of the noise observed in elementary two-qubit
entangling gates in current devices.

\begin{conjecture}[The Fully Depolarizing Noise Conjecture, informal]
Whenever two physical qubits are entangled, the effective noise acting on them
contains a joint fully depolarizing component. The weight of this component is
bounded below by a positive quantity depending on the physical implementation
and on the amount of entanglement between the qubits. In particular, for Bell
pairs, this weight is expected to be comparable to the rate of fully
depolarizing noise in elementary two-qubit entangling gates.
\end{conjecture}

The novelty of the conjecture is not the appearance of two-qubit fully depolarizing noise itself, but the assertion that a comparable component necessarily persists for entangled physical qubits, including those whose entanglement is generated indirectly, with a strength comparable to that of elementary entangling gates.

In this paper I revisit this conjecture from a twenty-year perspective and
discuss, more briefly, other aspects of my work on quantum computation.

In Section~\ref{s:bell} I formulate the conjecture first for Bell states. In
Section~\ref{s:fdnc} I formulate it for general entangled states, and more
generally for pairs of qubits with localizable entanglement, in the sense of
Verstraete--Popp--Cirac \cite{VPC04}. I then describe implications for
entangled states on more than two qubits. In Section~\ref{s:ce} I discuss
some consequences of the conjecture for multi-qubit entangled states, including
states arising in quantum error-correcting codes. One notable consequence is
noise synchronization; another is a discrepancy between error rates measured
in trace distance and error rates measured by qubit-error counts. 
In Section~\ref{s:exp} I discuss ways to test the conjecture experimentally.
Such tests appear feasible already on current quantum devices, although they
also present substantial experimental and statistical challenges. 

In Section~\ref{s:nscc} I briefly describe my subsequent work on computational
complexity and noise sensitivity for noisy intermediate-scale quantum systems.
In Section~\ref{s:stat} I describe statistical tools for examining experimental
claims about NISQ (noisy intermediate-scale
quantum) computers.\footnote {Scrutinizing both experimental and theoretical claims in quantum computing is especially important since billion-dollar industries and large public investments are involved.} 
Section~\ref{s:phys} discusses some connections to
physics and relates them to concerns about quantum computers raised by Alicki, and others.

Since this paper is written for a volume honoring Yuri Gurevich, I also include
a personal section. In Section~\ref{s:yuri} I describe my friendship with Yuri
and several broader themes on which our interests have intersected; these
themes also provide wider perspectives on various aspects of my
quantum-computation research. Section~\ref{s:conc} concludes. Finally,
Appendix~\ref{s:counter} presents several counterarguments to the conjecture
and my responses to them, Appendix~\ref{s:moment} contains a simple
probabilistic theorem illustrating how pairwise correlated errors can force
large synchronized error events, and Appendix~\ref{s:back} describes background and related literature regarding quantum fault tolerance and correlated noise and regarding quantum advantage. 

\section{The Bell-State Form of the Fully Depolarizing Noise Conjecture}
\label{s:bell}

\begin{quote}
\emph{\textbf{Conjecture (Fully Depolarizing Noise for Bell States).}}\\
\emph{For every physical implementation of a quantum computer, whenever a Bell
state on a pair of physical qubits is created, the effective noise channel
contains a joint fully depolarizing component: with some small probability
$t>0$, both qubits are replaced by the maximally mixed state. }
\end{quote}

Equivalently, the preparation of a Bell state, or two-qubit cat state, on two
physical qubits necessarily carries a non-negligible probability that
\emph{both} qubits undergo fully depolarizing noise. I further conjectured that $t$ is 
comparable to the rate of fully
depolarizing noise in elementary two-qubit entangling gates.

Twenty years ago I proposed that this phenomenon cannot be avoided by any
method of preparing a Bell state on a pair of physical qubits. The conjecture
was stated more broadly: it applied to every entangled pair of physical qubits,
not necessarily maximally entangled ones. Moreover, the weight of the joint
fully depolarizing component was conjectured to grow at least linearly with the
amount of entanglement between the qubits. Even more generally, it applies to
pairs of qubits with localizable entanglement in the sense of
Verstraete--Popp--Cirac \cite{VPC04}. These more general formulations are
discussed in Section~\ref{s:fdnc}.

To my knowledge, the conjecture remains open. I will refer to it as FDNC.

\subsection{Bell states and entangled states}

A \emph{Bell state} is a maximally entangled state of two qubits. A canonical
example is
\[
\frac{1}{\sqrt{2}}(|00\rangle+|11\rangle).
\]
The four standard Bell states form an orthonormal basis:
\[
\frac{1}{\sqrt{2}}(|00\rangle\pm|11\rangle),
\qquad
\frac{1}{\sqrt{2}}(|01\rangle\pm|10\rangle).
\]
They are obtained from one another by local unitary operations.

Bell states represent the simplest and most fundamental form of quantum
entanglement. They play a central role in quantum information theory and are
the basic resource behind protocols such as quantum teleportation, entanglement
swapping, and many constructions in quantum error correction.

More generally, a pure state of two qubits can be written as
\[
|\psi\rangle =
a|00\rangle+b|01\rangle+c|10\rangle+d|11\rangle,
\]
where $a,b,c,d\in\mathbb C$ and
\[
|a|^2+|b|^2+|c|^2+|d|^2=1.
\]
Such a state is called \emph{entangled} if it cannot be written as a tensor
product
\[
|\psi\rangle=|\alpha\rangle\otimes|\beta\rangle
\]
of single-qubit states. Equivalently, for a pure two-qubit state,
$|\psi\rangle$ is entangled if the reduced density matrix of either qubit is
mixed.

Every two-qubit pure state can be written, after suitable local unitary
transformations on the two qubits, in the form
\[
|\psi\rangle
=
\sqrt{\lambda}\,|00\rangle
+
\sqrt{1-\lambda}\,|11\rangle,
\]
where $0\le \lambda\le 1$. This representation is the Schmidt decomposition.
The state is entangled whenever $0<\lambda<1$, and it is a Bell state when
$\lambda=1/2$.

Up to local unitary transformations, two-qubit pure states are described
by this one-parameter family, with maximal entanglement at \(\lambda=1/2\).

\paragraph{A remark on Bell-state symmetry.}
For an isolated Bell state on two qubits, full depolarization of one qubit and
joint full depolarization of the two-qubit pair give the same output density
matrix, namely \(I_{AB}/4\). Thus Bell-state tomography alone cannot distinguish
these two error mechanisms. It is one reason why 
channel tomography, or larger states with localizable entanglement are
needed for a clean experimental test of the conjecture.

The conjecture concerns effective noise channels, not merely their action on a
single two-qubit state. This point becomes important in larger systems. If a
Bell pair is obtained by localizing entanglement from a many-qubit state, then
one-qubit depolarization and joint depolarization of the localized pair need not have the same effect on the full system before the localization
measurement. Likewise, in a state where many pairs have localizable
entanglement, requiring joint depolarizing components for many pairs is not
equivalent to ordinary single-qubit depolarization. Thus the conjecture should
be formulated at the level of effective noise channels, and not only at the
level of the final two-qubit density matrix of an isolated Bell pair.


\subsection{Direct versus indirect creation of entanglement}

For directly gated qubits, a joint two-qubit depolarizing component is already
part of standard stochastic noise models. The novelty of the conjecture is the
claim that a comparable error component remains present even when the Bell
state is created indirectly. Quantitatively, the conjectured value of $t$ is
in the ballpark of the fully depolarizing noise rate for elementary two-qubit
entangling gates.

Here is a simple indirect construction. Start with qubit 1 in the state
\[
\frac{1}{\sqrt2}(|0\rangle+|1\rangle),
\]
and qubits 2 and 3 in the state $|0\rangle$. Apply a CNOT with control \(1\) and target \(2\), followed by a CNOT
with control \(2\) and target \(3\). This produces the three-qubit GHZ state
\[
\frac{1}{\sqrt2}(|000\rangle+|111\rangle).
\]
Measuring qubit 2 in the $X$-basis then produces a Bell state on qubits 1 and
3, up to a Pauli correction depending on the measurement outcome.

Now consider a simplified independent gate-level noise model in which each
two-qubit gate independently fully depolarizes its participating pair with
probability \(t\), and all other operations are ideal. First count only
the endpoints directly acted on by these elementary depolarizing faults:
\begin{itemize}[leftmargin=2em]
 \item qubit \(1\) is directly depolarized by a fault on the first gate,
       with probability \(t\);
 \item qubit \(3\) is directly depolarized by a fault on the second gate,
       with probability \(t\);
 \item both endpoints are directly acted on by elementary depolarizing
       faults only when both gates fault, with probability \(t^2\).
\end{itemize}
This counts the locations of the elementary faults, not their full effects
after propagation through the circuit.

\paragraph{Remark on the effective channel.}
To state the relevant claim precisely, regard the entire mediated procedure
as a two-qubit gate acting on arbitrary inputs to qubits \(1,3\), with
qubit \(2\) initialized in \(|0\rangle\). Retain both outcomes of the
mediator's \(X\)-measurement and apply \(Z_1\) for outcome \(-\).
The ideal endpoint gate is a CNOT from qubit \(1\) to qubit \(3\).
For the noise model above, a calculation of the effective endpoint channel
shows that its largest joint fully depolarizing component has weight
exactly \(t^2\).\footnote{The equality can be checked as follows. The branch in which both gates
fault has probability \(t^2\) and fully depolarizes the endpoints,
giving the lower bound. For the input \(|0+\rangle\), the no-fault,
first-fault-only, and second-fault-only branches give, respectively,
\(|0+\rangle\langle0+|\),
\(I_1/2\otimes|+\rangle\langle+|_3\), and
\(|0\rangle\langle0|_1\otimes I_3/2\).
Thus measuring \(Z_1,X_3\) gives the outcome \((1,-)\) with probability
\(t^2/4\). Any joint fully depolarizing component of weight \(w\)
would contribute at least \(w/4\) to this probability, so \(w\le t^2\).}

This is a channel-level statement, not a statement about the total noise
in the final Bell state. A single elementary fault can already produce
the maximally mixed output \(I_{13}/4\) for the Bell-preparation input,
without acting as full two-qubit depolarization on arbitrary inputs.
Indeed, the Bell-state output has a maximally mixed contribution of weight
\(2t-t^2\). Thus the output noise remains of order \(t\), while the joint
fully depolarizing channel weight is only \(t^2\).

\paragraph{The quantitative prediction of FDNC.}
The conjecture asserts that nature {\em does not permit} this quadratic
suppression of the channel-level weight. Even when entanglement is
generated indirectly---through mediating qubits, measurements, feedforward,
or circuit identities---the effective noise channel should retain a joint
fully depolarizing component of order \(t\), comparable to the elementary
two-qubit noise scale.

For illustration, if the elementary joint fully depolarizing weight is
\(t=5\times10^{-3}\), then the simplified model gives
\(t^2=2.5\times10^{-5}\) for the mediated gate. FDNC predicts a weight
comparable to the elementary scale instead. Here \(t\) denotes a
specified channel-component weight, not an arbitrary measure of total
gate infidelity.

\subsection{Physical intuition}

The conjecture expresses, in a mathematically sharp way, a common physical
intuition:

\begin{quote}
Entanglement between two physical systems requires a genuine physical
interaction, and such interaction inevitably exposes both systems to correlated
noise.
\end{quote}

For example, one should not expect an indirect method of producing an
entangling operation between two transmons to be much more reliable, at the
level of joint fully depolarizing noise, than a direct physical interaction
between them. Standard gate-level noise modeling does not enforce this. Indeed,
in the simple circuit model described above, indirect constructions reduce the
joint fully depolarizing error from order $t$ to order $t^2$. Thus the
conjecture postulates an additional and more restrictive structural property of
physical noise, beyond standard local stochastic models.

\subsection{Further discussion of the conjecture}

We record here several further aspects of the conjecture. Its more general
formulation for entangled pairs is given in Section~\ref{s:fdnc}, and its
consequences for entangled states on more than two qubits are discussed in
Section~\ref{s:ce}.

\paragraph{The state of the conjecture.}
To my knowledge, the conjecture remains open. Current NISQ devices could
in principle test it, even at noise levels above the fault-tolerance threshold.
The main difficulty is not the number of qubits, but the identification of the
specific correlated fully depolarizing component in the effective noise
channel. We return to experimental tests in Section~\ref{s:exp}.

\paragraph{The conjecture is structural and not mechanistic.}
Some objections to the conjecture treat it as if it were proposing a specific
mechanistic claim of the following form: here is the physical source of the
noise---for example microscopic Hamiltonian couplings, thermal photons,
leakage, or crosstalk---and here is the dynamical derivation showing why fully
depolarizing correlations appear. Such a claim would specify the environment,
the interaction model, the time evolution, and the exact channel arising from
tracing out the bath.

My conjecture is not mechanistic in this sense. It is a structural claim about
the form of the effective noise channel:

\begin{quote}
\emph{Whenever two physical qubits can be prepared, or projected, into an
entangled state, there is an intrinsic order-$t$ fully depolarizing component
acting jointly on them.}
\end{quote}

Here $t$ is a physical noise scale, comparable to
the rate of fully depolarizing noise in  two-qubit entangling gates.
The conjecture concerns the form of the effective channel, not the microscopic
process generating it. Of course, mechanistic explanations, which may differ
between physical implementations, would be valuable if they could be found.

\paragraph{Gated qubits and ``purifying'' gate errors.}
For directly gated qubits, a joint two-qubit depolarizing component is already
part of standard stochastic noise models. The novelty of the conjecture is the
claim that an entangled pair prepared indirectly cannot avoid a comparable
joint fully depolarizing component. In ordinary independent gate-level models,
one may start with two-qubit gates that include fully depolarizing noise at
rate $t$, and nevertheless arrange that the final indirectly prepared pair has
a joint fully depolarizing component only of order $t^2$. In this sense, the
model appears to ``purify'' the physical two-qubit gate noise. The conjecture
asserts that such purification is not physically possible for entangled
physical qubits.

We note that fault-tolerance schemes allow a very strong form of purification. For error channels based on the standard noise models, and for complicated entangled states on $n$ physical qubits, quantum fault tolerance allows exponential suppression of accumulated errors. In particular, it predicts exponentially small effective joint depolarizing errors for most pairs of physical qubits whose recent causal histories do not meet.

\subsection{Motivation and sources}

The original motivation for the conjecture was to ``reverse engineer'' natural
structural conditions on noise that would cause quantum fault tolerance to
fail. For this reason, I did not view the conjecture itself as a reason for
people to revise their \emph{a priori} beliefs about quantum computers. Rather,
I regarded it as a concrete and testable benchmark for quantum devices---one
that is meaningful both for small systems with only a handful of qubits and for
larger systems. The conjecture is relevant even to systems operating at noise
levels above the threshold required for fault tolerance.

An early version of the conjecture 
appeared as {\it Postulate 1} in my 2006 paper \cite{Kal06},
\emph{How quantum computers can fail}, and in several subsequent works.\footnote {In some early papers I also formulated related conjectures for correlated
\emph{classical} noisy systems, asserting that correlation for the ``signal''
implies correlation for the noise. In this generality, classical computation
provides simple counterexamples. Nevertheless, suitably adjusted versions of
the conjecture appear to apply to several natural classical physical systems.} 
The
related ``noise synchronization'' consequence for highly entangled states
appeared as {\it Postulate 2}, and it connects back to my 2005 paper
\cite{Kal05}. These two conjectures together with a further consequence
for the error rate measured in terms of qubit errors, became Predictions 1, 2,
and 3 in my 2016 paper \cite{Kal16}, \emph{The quantum computer puzzle}.
These conjectures also played a central role in my 2012 debate with Aram Harrow \cite{KalHar12}.

\section{Formulation of the Conjecture for General Entangled States}
\label {s:fdnc}
In this section we formulate the Fully Depolarizing Noise Conjecture
for general entangled states using entropy of entanglement.

\subsection{Entropy of entanglement}

Let $A$ and $B$ be two qubits and let $|\psi\rangle_{AB}$ be a pure state on
$\mathbb{C}^2\otimes\mathbb{C}^2$.
Let
\[
\rho_A = \operatorname{Tr}_B(|\psi\rangle\langle\psi|)
\]
be the reduced density matrix on $A$.
The \emph{entropy of entanglement} of $|\psi\rangle$ is defined by
\[
E(\psi) := S(\rho_A),
\]
where $S(\rho)=-\operatorname{Tr}(\rho\log_2\rho)$ denotes the
von Neumann entropy.

The entropy of entanglement satisfies
\[
0 \le E(\psi) \le 1.
\]
It vanishes for product states and equals $1$ for maximally entangled
(two–qubit Bell) states.

\subsection{Joint fully depolarizing channel}

Let $A$ and $B$ be two physical qubits in a larger quantum system.
The \emph{joint fully depolarizing channel} acting on $A,B$
is the completely positive trace-preserving map
\[
\mathcal D_{AB}(\rho)
= \frac{I_{AB}}{4}\otimes \operatorname{Tr}_{AB}(\rho),
\]
where $I_{AB}$ is the identity on the two-qubit space.

This channel replaces the state of qubits $A,B$
by the maximally mixed state and removes all correlations
between $A,B$ and the rest of the system.

\subsection{Depolarizing weight of a noise channel}

Let $\mathcal E$ be a completely positive trace-preserving map describing
the effective noise channel acting on the full quantum system.
We define the \emph{depolarizing weight} of $\mathcal E$ on the pair
$(A,B)$ by
\[
w_{AB}(\mathcal E)
=
\sup
\left\{
\lambda\in[0,1]:
\mathcal E
=
\lambda\,\mathcal D_{AB}
+
(1-\lambda)\,\mathcal F
\text{ for some CPTP map }\mathcal F
\right\}.
\]

Thus $w_{AB}(\mathcal E)$ measures the largest possible coefficient of the
joint fully depolarizing channel $\mathcal D_{AB}$ that can appear in a
convex decomposition of the noise channel $\mathcal E$.

\subsection{The Fully Depolarizing Noise Conjecture}

We now formulate the conjecture in terms of entropy of entanglement.

\medskip

\noindent
\textbf{Conjecture (Fully Depolarizing Noise Conjecture).}
\emph{
There exists a constant $c>0$, depending only on the physical
implementation of the device, such that whenever two physical qubits
$A,B$ are ideally prepared in a pure state $|\psi\rangle_{AB}$,
the effective noise channel $\mathcal E$ acting on the system satisfies
\[
w_{AB}(\mathcal E) \ge c\,E(\psi),
\]
where $E(\psi)$ is the entropy of entanglement of $|\psi\rangle$. 
}

\medskip

In particular, for maximally entangled states $E(\psi)=1$,
and the conjecture asserts that the noise channel must contain
a fully depolarizing component of weight at least $c$. Also here, I conjecture that 
$c$ is comparable to the rate of fully
depolarizing noise in elementary two-qubit entangling gates.

\paragraph{Remark on other entanglement measures}

For two-qubit pure states, several standard entanglement measures
are equivalent up to monotone transformations.
In particular, one may formulate the conjecture using
\emph{concurrence} or \emph{entanglement of formation}.
A formulation based on concurrence would yield a slightly stronger
bound near product states, but the entropy-based formulation stated
above is sufficient for the purposes of this paper.

\subsection{Localizable entanglement}

Given a quantum state, the localizable entanglement of a pair of qubits,
as defined by Verstraete--Popp--Cirac \cite{VPC04}, is the maximum average
entanglement that can be obtained between them by performing single-qubit
measurements on all remaining qubits and conditioning on the outcomes.

The conjecture extends to such \textbf{localizable entangled pairs}: whenever a pair of qubits exhibits non-zero localizable entanglement, there exists a small probability $t>0$ (depending linearly on the value of the localizable entanglement) that \textbf{both qubits are replaced by the maximally mixed (maximum-entropy) state.}

\paragraph{Remark.}
In my papers I struggled to formulate the conjecture precisely and to identify
the most appropriate notion of entanglement. A formulation in terms of
localizable entanglement was proposed in my 2009 paper \cite{Kal09}. 
There I used the term ``emergent entanglement'' for the maximum expected
entanglement obtainable by separately measuring the other qubits, thus
combining optimization over measurements with averaging over their outcomes.

\subsection{An operational principle behind the extensions}

There is an operational way to motivate the extension from Bell states to
general entangled states, and from entangled states to localizable
entanglement. Every entangled two-qubit pure state can be converted, with
positive probability, into a Bell state by a suitable local filtering
measurement. Thus a partially entangled pair has a nonzero Bell-pair content
in this operational sense.

This motivates the following principle:

\begin{quote}
\emph{The fully depolarizing noise component predicted by FDNC cannot be
removed, or made arbitrarily small, by embedding the entangled pair into a
larger state and then obtaining it by local measurements, filtering, and
postselection.}
\end{quote}

For a two-qubit pure state in Schmidt form
\[
|\psi\rangle
=
\sqrt{\lambda}|00\rangle+\sqrt{1-\lambda}|11\rangle ,
\]
the optimal probability of converting a single copy of $|\psi\rangle$ into a
Bell pair by local filtering is
\[
2\min(\lambda,1-\lambda).
\]
This gives another natural way to measure the amount of entanglement relevant
to the conjecture. It vanishes for product states and equals $1$ for Bell
states. It is closely related, for two-qubit pure states, to the concurrence
\[
2\sqrt{\lambda(1-\lambda)}
\]
and to the entropy of entanglement used above. These quantities are monotone
functions of one another, although they have different behavior near product
states.

The same principle also motivates the extension to localizable entanglement.
If a many-qubit state can be measured, by local measurements on the other
qubits, so as to produce an entangled state on a pair $A,B$, then the
conjectured fully depolarizing component for the pair should persist for such a localization procedure. Thus the Bell-state, two-qubit entangled-state, and localizable-entanglement
forms of FDNC are different levels of the same operational principle.

\subsection{Implications for larger entangled states}

For more complicated entangled states---surface-code states, GHZ states, random-circuit sampling states, and cluster states---the extended conjecture applies to every pair of physical qubits that can exhibit localizable entanglement. 
In several canonical families, such as GHZ states and connected graph states,
this includes every pair of qubits. For quantum error-correcting states,
including surface-code states, one expects many pairs, and in some formulations
essentially every pair, to exhibit substantial localizable entanglement.

If true, this would have severe consequences for quantum error correction: correlated depolarizing noise on pairs of qubits is far more damaging than the quasi-independent noise assumed in threshold theorems. The reason is that a noise channel in which the events ``qubit $i$ is depolarized'' and ``qubit $j$ is depolarized'' are substantially positively correlated for all (or even most) pairs $(i,j)$ necessarily leads to large-scale error synchronization. We will elaborate on error synchronization in Section \ref {s:ce}.

\section{Consequences for Large Entangled States}
\label {s:ce}
In this section, we explain why the conjecture for pairs of entangled qubits is damaging for 
entangled states on more than two qubits and especially for attempts to create quantum error-correcting codes. We first describe a simple mathematical model that explains why the conjecture would lead to error synchronization, and move on to describe the effect of error synchronization on the error rate.

\subsection{A Simple Probabilistic Model: Tail Bounds from Pairwise Marginals}

To illustrate how strong pairwise correlations can force even large-scale error synchronization, consider the following simple probabilistic model. There is a probability distribution $W$ on $\{0,1\}^n$. We generate a random bitstring $w$ according to the distribution $W$. If $w_k=1$ we fully depolarize qubit $k$.

Let $X=(x_1,\dots,x_n)\in\{0,1\}^n$ be a random 0--1 vector of length $n$, and write $S=\sum_{i=1}^n x_i$. We assume that the joint distribution of every pair $(x_i,x_j)$ (for $i\neq j$) is fixed. The goal is to minimize the upper tail probability $\Pr(S\ge \lceil sn\rceil)$.
\begin{theorem}[symmetric ``one-parameter'' pair law]\label{thm:pairlaw}
Assume that for every $i\neq j$,
\[
\Pr(x_i=0,x_j=0)=1-t,\qquad \]
\[\Pr(x_i=0,x_j=1)=\Pr(x_i=1,x_j=0)=\Pr(x_i=1,x_j=1)=\frac{t}{3}.
\]


Let $S=\sum_{i=1}^n x_i$ and let $K=\lceil s n\rceil$.  
Then for $2\le K < \frac{n+3}{2}$, in the regime
\[
t \le \frac{3(K-1)}{n+2K-3},
\]
the minimum possible value of $\Pr(S\ge K)$ over all such distributions equals
\[
\min \Pr(S\ge K)
=
\frac{t(n-2K+3)}{3(n-K+1)}.
\]

\end{theorem}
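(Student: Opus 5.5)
The plan is to reduce the problem to a linear program over the distribution of $S$ alone and then solve that program by exhibiting matching primal and dual solutions.

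\emph{Reduction to the distribution of $S$.} The prescribed pair law implies $\Pr(x_i=1)=2t/3$ and $\Pr(x_i=x_j=1)=t/3$. Together with $\Pr(x_i=0,x_j=0)=1-t$, these two numbers determine the whole $2\times 2$ pair law by inclusion--exclusion. Hence any admissible distribution satisfies
\[
\mathbb E[S]=\tfrac{2tn}{3},\qquad \mathbb E[S(S-1)]=\sum_{i\neq j}\Pr(x_i=x_j=1)=\tfrac{n(n-1)t}{3}.
\]
Conversely, let $(p_k)_{k=0}^n$ be any probability distribution on $\{0,\dots,n\}$ with these two factorial moments. Draw $k$ from $p$ and then a uniformly random $k$-subset of $[n]$ as the support of $X$. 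This gives an exchangeable vector with $\Pr(x_i=1)=\mathbb E[S]/n$ and $\Pr(x_i=x_j=1)=\mathbb E[S(S-1)]/(n(n-1))$, so it realizes exactly the required pair law. Therefore the minimum of $\Pr(S\ge K)$ equals the value of the LP that minimizes $\sum_{k\ge K}p_k$ subject to $p\ge 0$, $\sum_k p_k=1$, $\sum_k kp_k=2tn/3$ and $\sum_k k(k-1)p_k=n(n-1)t/3$.

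\emph{Upper bound (primal construction).} I will use a distribution supported on $\{0,\,K-1,\,n\}$. Put $m=K-1$, and let the masses be $p_0$, $r$ at $m$ and $q$ at $n$. Solving the two moment equations gives
\[
q=\frac{t(n-2K+3)}{3(n-K+1)},\qquad rm=\tfrac{2tn}{3}-qn,\qquad p_0=1-r-q.
\]
Here $q\ge 0$ because $K<(n+3)/2$. I expect $r\ge0$ to reduce to a harmless inequality. The condition $p_0\ge 0$ should rearrange to exactly $t\le 3(K-1)/(n+2K-3)$, which is the stated regime. Checking this algebra is the main bookkeeping step. For this distribution $\Pr(S\ge K)=q$, which is the claimed value.

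\emph{Lower bound (dual certificate).} I will use the quadratic $f(k)=c\,k\,(k-(K-1))$ with $c=1/(n(n-K+1))$. It vanishes at $0$ and $K-1$, is nonpositive on $\{0,\dots,K-1\}$, and equals $1$ at $n$. Since $f$ is convex, on $\{K,\dots,n\}$ it is bounded by $\max(f(K),f(n))\le 1$. Hence $f\le \mathbf 1_{\{k\ge K\}}$ on $\{0,\dots,n\}$. Writing $f(k)=c\,[k(k-1)-(K-2)k]$ expresses $\mathbb E f(S)$ through the two known moments:
\[
\Pr(S\ge K)\ \ge\ \mathbb E f(S)=c\Big(\tfrac{n(n-1)t}{3}-\tfrac{2tn(K-2)}{3}\Big)=\frac{t(n-2K+3)}{3(n-K+1)}.
\]
This holds for every admissible distribution and matches the construction. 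Complementary slackness is consistent, since $f=\mathbf 1_{\{k\ge K\}}$ exactly on the support $\{0,K-1,n\}$.

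\emph{Where the difficulty lies.} The main conceptual step is guessing the support $\{0,K-1,n\}$ and the touching quadratic. The only point needing care is verifying that the stated range of $t$ is exactly what keeps the primal weights nonnegative. Outside that range the optimal support changes and the formula no longer applies.
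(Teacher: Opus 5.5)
Your proposal is correct and follows essentially the same route as the paper. Both use the dual quadratic $s(s-K+1)/(n(n-K+1))$, which is dominated by $\mathbf 1_{\{s\ge K\}}$ and whose expectation is fixed by the two factorial moments, together with the primal extremizer on $\{0,K-1,n\}$, lifted to an exchangeable vector via a uniformly random subset. The bookkeeping you deferred also works out: $r=\frac{n(n-1)t}{3(K-1)(n-K+1)}\ge 0$ and $r+q=\frac{t(n+2K-3)}{3(K-1)}$, so $p_0\ge 0$ is exactly the stated regime.
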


In particular, if $K=\lceil s n\rceil$ with $0<s<\tfrac12$, then for large $n$,
\[
\Pr(S\ge s n)
\;\ge\;
\frac{1-2s}{3(1-s)}\,t
+o(1).
\]

Theorem~4.1 demonstrates that pairwise correlations can force a
non-negligible probability of observing a linear number of simultaneous
errors, even when the individual and joint error probabilities are small.

A proof and discussion of this implication will be given in the appendix.

\subsection{Error rate} 
\label {s:er}

There is an important distinction between error rate measured by distance from the ideal quantum state and error rate measured by the number of physical qubits affected by errors. Let $\rho$ be the ideal state and suppose that the actual state is \[ \rho'=(1-\epsilon)\rho+\epsilon\sigma . \] Then the trace distance between $\rho'$ and $\rho$ is at most of order $\epsilon$: \[ D(\rho,\rho') \leq \epsilon , \] where \[ D(\rho,\rho')=\frac12\|\rho-\rho'\|_1 . \] If the system now evolves by a noiseless unitary operator $U$, then \[ D(U\rho U^*,U\rho' U^*)=D(\rho,\rho'). \] Thus the error remains of size $\epsilon$ when measured in trace distance. However, the same error may look very different when measured by the number of physical qubits affected. A probability-$\epsilon$ error acting on one qubit contributes order $\epsilon$ to the expected number of qubit errors. But after applying a unitary computation $U$, the same error may be transported to an error acting nontrivially on many qubits. 
When the error is pushed forward through the computation, an error operator
$E$ becomes
\[
E\mapsto UEU^* .
\]
Even if $E$ is a one-qubit error, the operator $U E U^*$ may have large support. Thus a trace-distance error of size $\epsilon$ can correspond to order $\epsilon$ qubit errors in one description, but to order $n\epsilon$ qubit errors after the error has been spread by the computation. Equivalently, consider a noise process in which, with probability $\epsilon$, one qubit is hit, and with probability $1-\epsilon$ no error occurs. This has trace-distance scale $\epsilon$ and qubit-error scale $\epsilon$. But if a later unitary evolution spreads this error over $cn$ qubits, where $c>0$ is a constant, then the trace-distance scale is still $\epsilon$, while the expected number of qubit errors becomes $c n \epsilon$. This illustrates why trace distance alone does not capture the full burden of an error for fault tolerance. Trace distance measures distinguishability from the ideal state and is preserved by noiseless unitary evolution. By contrast, the number of qubits affected by an error is a dynamical quantity: it depends on how the error propagates through the computation. Error synchronization exploits precisely this distinction. A small amount of noise in trace distance may correspond to rare error events, but those rare events may involve many qubits simultaneously. Thus the same trace-distance error can correspond either to $O(\epsilon)$ qubit errors or to $O(n\epsilon)$ qubit errors, depending on the spatial structure of the error after propagation through the computation. 

\subsection{Some broader directions}
\label{s:broader}

The primary test for FDNC and its consequences is experimental; this will be discussed
 in Section~\ref{s:exp}. However, another way to examine the
conjecture is to look for extensions, consequences, and theoretical
connections beyond the specific context of quantum computers.

Two broad assertions are involved. First, noise accumulation in entangled
quantum systems cannot be suppressed in the manner required for scalable
fault-tolerant quantum computation. Second, for noisy quantum states and
evolutions, at least when the noise level is low, there may be systematic
relations between the noise and the corresponding ideal state or evolution.

Over the years, I made several attempts to place these ideas into a broader
mathematical framework, extending beyond the specific, and hypothetical,
setting of quantum computers. While the words ``noise'' and ``error'' are
especially natural in the engineering language of quantum devices, the same
issues are closely related to the more general notions of approximation,
perturbation, stability, and effective description in theoretical physics.

One possible principle is that noise should respect, or at least reflect, the
symmetries of the ideal state or evolution. In this form, the question is
naturally connected with representation theory, covariant quantum channels,
and the commutant of the symmetry representation associated with the
``signal.'' This direction is related to Prediction 11 in \cite{Kal16}. At
present I regard it as a useful organizing idea rather than a precise
formulation of FDNC.

A second direction is the study of restricted classes of noisy continuous-time
evolutions, which I called \emph{smoothed Lindblad evolutions}. The guiding
idea is that realistic noise should not be modeled as an arbitrary
instantaneous perturbation, but should be constrained by a short time-window
of the ideal evolution. This was one of the motivations behind Prediction 6,
``Time smoothing,'' in \cite{Kal16}.

A third direction concerns intrinsic time-parametrization. For time-dependent
quantum evolutions, I proposed that the rate of unavoidable noise should be
bounded below by a measure of the noncommutativity of the ideal evolution
during the relevant time interval; this was Prediction 7 in \cite{Kal16}.
This idea resembles the unsharpness principle
appearing in Polterovich's work on symplectic geometry of quantum noise
\cite{Pol14}. It was also inspired by conversations with R. Kosloff concerning
specific open quantum systems.


\section{Experimentally testing the conjecture}
\label{s:exp}

The Fully Depolarizing Noise Conjecture is, in principle, experimentally testable. One of the reasons for
formulating it in terms of physical qubits and physical noise channels is that
it leads to predictions that differ sharply from those of standard local noise
models. The difficulty is that the conjecture concerns not merely the total
amount of noise, but the structure of the noise channel and, in particular,
the presence of correlated fully depolarizing components.

\subsection{Experiments with a handful of qubits}

A first class of tests involves only a small number of qubits. Already for
three or four qubits one can compare different ways of producing an entangled
pair of physical qubits. For example, one may prepare a Bell pair directly by
a two-qubit gate, or indirectly by creating a small GHZ-type state and then
measuring the intermediate qubits. In standard gate-local noise models, the
genuine joint fully depolarizing component in the effective channel on the
final pair may be quadratic in the elementary two-qubit error rate when the
entanglement is produced indirectly. Here ``joint fully depolarizing
component'' refers to a channel-level component in which both final qubits are
depolarized together, not merely to the loss of fidelity of the final Bell
state. FDNC predicts that such quadratic suppression should not occur: once
the two physical qubits are entangled, an order-$t$ joint fully depolarizing
component should be present.

A direct way to test FDNC is to characterize the effective two-qubit gate
rather than a single entangled output. For the mediated construction,
retain both intermediate measurement outcomes with the appropriate
corrections and keep the mediator initialization fixed. Process tomography
then consists of preparing a tomographically complete family of known
endpoint inputs, applying the same physical gate to each, and reconstructing
the corresponding output density matrices.

Entangled inputs are not required. For example, choose each endpoint
independently from
\[
|0\rangle,\quad |1\rangle,\quad |+\rangle,\quad
|+i\rangle=\frac{|0\rangle+i|1\rangle}{\sqrt2}.
\]
The density matrices of these sixteen product inputs span the
two-qubit operator space.
Together with suitable output measurements, they allow reconstruction of
the effective channel and estimation of its largest joint fully depolarizing
weight. One can then compare this weight for direct and indirect
implementations. Preparation and measurement errors must be accounted for,
with uncertainties reported, and the physical gate implementation must
remain fixed across input preparations. Its entangling action should also
be verified.

\paragraph{Remark.}
For an isolated Bell state, full depolarization of one qubit and joint
depolarization both produce \(I_{AB}/4\). Non-maximally entangled pure
states distinguish these particular channels, but other noise channels
can still agree on their outputs. Thus state tomography of a single
entangled output, or even a restricted family of such outputs, need not
identify the joint fully depolarizing component. Process tomography
addresses this ambiguity by probing the same channel on a spanning
family of inputs.

\subsection {Experiments for larger entangled states}

A second class of tests concerns larger entangled states, say on 10--50
physical qubits. Natural candidates include GHZ states, cluster states,
surface-code states, random-circuit-sampling states, and states arising in
quantum error-correcting codes. In these systems FDNC predicts not only
pairwise correlated depolarization, but also error synchronization. Small
trace-distance noise may correspond to rare events in which many physical
qubits are affected simultaneously. Equivalently, the error rate measured in
terms of qubit errors may scale up dramatically, even when the trace-distance
weight of the bad event remains small.

One way to express the predicted signature is to consider the random set of
qubits hit by an error event in one round. Standard local noise models predict
that, if the single-qubit error rate is $p$, then for most pairs
\[
\Pr(i,j\hbox{ are both hit})\approx p^2.
\]
FDNC predicts instead that for many pairs with substantial localizable
entanglement
\[
\Pr(i,j\hbox{ are both hit})\ge cp
\]
for some constant $c>0$. Thus the experimental signature is not merely a
larger total error rate, but a different correlation structure among error
events. In particular, one should look for error synchronization: unexpectedly
large tails in the number of qubits, checks, or detection events affected in a
single round.

Quantum error-correcting codes provide a particularly natural testing ground.
The syndrome data obtained in repeated rounds already contain information
about spatial and temporal correlations among errors. One may therefore look
for excess correlations between detection events, unusually heavy syndrome
patterns, or tails of the distribution of the number of affected checks that
are incompatible with standard local stochastic noise. Such tests would not by
themselves prove the conjecture, but they could reveal whether the noise
behaves in the synchronized way predicted by it.

Random-circuit-sampling experiments provide another possible testing ground,
but also illustrate a difficulty. Fidelity measures such as XEB primarily
estimate the probability that no error occurred. They are much less sensitive
to the internal structure of the error conditioned on an error occurring. Thus
testing FDNC in such systems would require going beyond global fidelity
estimates and examining lower-order marginals, correlations between qubits,
correlations between Pauli observables, or other statistical signatures of
synchronized errors.

\subsection{Inferring the channel and experimental avenues}

There is a fundamental difficulty in all these experiments: the noise channel
is not directly observed. It must be inferred from data, usually through some
form of tomography, randomized measurements, gate-set tomography, or statistical
model fitting. This inference is complicated by state-preparation and
measurement errors, calibration errors, drift, leakage, crosstalk, coherent
errors, and the gauge freedoms inherent in gate-set tomography and related reconstruction methods. Moreover, the
conjectured noise may not appear as a clean additional mechanism separate from
familiar sources of noise. It may well arise partly, or even largely, from mechanisms already known to experimentalists and regarded as engineering difficulties. The point of the conjecture is that such difficulties may not be
merely engineering obstacles, but could reflect an inherent obstruction to
maintaining highly entangled physical states with sufficiently benign noise.

Several experimental avenues are natural. Superconducting devices, including
cloud-accessible IBM quantum computers, are especially relevant for initial
small-qubit tests because they allow flexible circuit-level experiments and
broad access. Trapped-ion devices, with high-fidelity gates and often
all-to-all connectivity, may be useful for controlled tests of indirect
entanglement generation. Neutral-atom systems are especially relevant for
larger entangled states, many-qubit correlations, and emerging quantum
error-correction experiments. Each platform has different strengths, and the
conjecture may manifest itself differently across them.

A realistic experimental program could therefore proceed in stages. First,
perform small-qubit experiments designed to distinguish linear from quadratic
scaling of the joint depolarizing component for indirectly created entangled
pairs. Second, perform tomography or randomized-measurement tests on
non-maximally entangled pairs, where single-qubit and two-qubit depolarization
are easier to distinguish. Third, analyze syndrome data from quantum
error-correcting experiments for signs of error synchronization. Fourth,
reexamine random-circuit-sampling data not only through fidelity or XEB, but
through statistical tests sensitive to the correlation structure of the errors.

No single experiment can prove the conjecture. At best, experiments can
support it, constrain the parameter regime in which it may hold, or refute
specific formulations of it. But even partial tests would be valuable. The
central issue is whether the noise in highly entangled physical states is
merely small, or whether it is small in a way compatible with the assumptions
needed for scalable fault-tolerant quantum computation.

\section{Noise Sensitivity and the Argument Against Quantum Supremacy}
\label{s:nscc}

Beginning around 2013, I pursued a different skeptical direction regarding
quantum computation. Unlike the Fully Depolarizing Noise Conjecture discussed
earlier, this line of work does not postulate exotic forms of noise. Rather,
it relies on standard noise assumptions and on ideas from computational
complexity and the theory of noise sensitivity. Its main conclusion is that
constant-level noise may destroy precisely the high-degree structure of the
ideal sampling distributions on which the complexity-theoretic hardness
arguments rely.

The starting point was joint work with Guy Kindler \cite{KalKin14}, motivated
by Boson Sampling and by the proposal of Aaronson and Arkhipov \cite{AarArk13}.
Boson Sampling offered a route to demonstrating ``quantum (computational) supremacy''
without the need for universal fault-tolerant quantum computers. Our analysis
suggested, however, that the output distributions of noisy Boson Sampling
devices are highly noise sensitive and that, at realistic noise levels, the
resulting noisy distributions admit low-degree approximations. Here
``low-degree'' refers to the Fourier--Hermite or Fourier--Walsh expansion of
the relevant probability distribution, truncated to terms of bounded degree.
Consequently, the computational complexity associated with ideal Boson Sampling
would disappear before reaching regimes where quantum supremacy could be
demonstrated.

This observation suggests that
noisy intermediate-scale quantum (NISQ) devices cannot produce computational
tasks beyond the reach of classical computers.
It also suggests that the requirements for scalable quantum error correction
are even more demanding: producing and maintaining good logical qubits appears
to require stronger control of noise than the control needed for NISQ quantum supremacy
experiments.

The central mathematical notion in our analysis is \emph{noise sensitivity},
introduced in the study of Boolean functions and percolation \cite{BKS99}. A
function is noise sensitive if a small random perturbation of its input almost
completely destroys the information carried by the output. The phenomenon is
closely related to Fourier analysis and to the concentration, or lack of
concentration, of the Fourier spectrum on low-degree terms.

The main conclusion of Kalai--Kindler \cite{KalKin14} and of several of my
subsequent papers \cite{Kal16,Kal18,Kal20} is the following.

\begin{quote}
\emph{In the asymptotic constant-noise regime, samples obtained by NISQ
devices represent efficient classical computation.}
\end{quote}

\noindent
In other words, the high-complexity features of the ideal quantum sampling
distribution are destroyed by constant noise, and the resulting noisy
distribution can be approximated by a low-complexity 
classical process.\footnote{In Kalai--Kindler \cite{KalKin14} and subsequent works, we identified
a low-level complexity class, denoted \({\bf LDP}\), contained in
\({\bf AC}^0\), the class of polynomial-size bounded-depth Boolean
circuits, which we argue well approximates the noisy samples produced
by NISQ devices.}

Related conclusions were obtained in later work by other researchers. For
example, Gao and Duan \cite{GaoDuan18} gave efficient classical simulations of
broad classes of noisy quantum computations at constant noise rate. Aharonov,
Gao, Landau, Liu, and Vazirani \cite{AGLLV22} gave a polynomial-time classical
algorithm for sampling from noisy random quantum circuits in the
anti-concentrating regime, up to inverse-polynomial total variation distance.
These and other works provide further evidence for the view that noise destroys
the computationally hard structure of NISQ sampling distributions (at least for random inputs).

This argument is conceptually distinct from the FDNC. The latter postulates a structurally damaging form of correlated
noise. The noise-sensitivity argument, by contrast, relies on conventional
models of noise and does not require additional physical assumptions. Unlike
the FDNC, which I did not regard, without
further experimental or theoretical support, as sufficient to change people's
{\em a priori} assessments regarding the possibility of quantum computing, the
argument with Kindler provides, in my opinion, a strong reason to doubt the
possibility of quantum supremacy \emph{without} quantum fault tolerance, and
in particular to doubt the computational power of NISQ devices. Indirectly, this also raises doubts about whether NISQ devices can provide a
path toward the good-quality quantum error correction required for
fault-tolerant quantum computation. 

The quantum-supremacy claims announced since 2019, beginning with the
experiment of Arute \emph{et al.} \cite{Aru+19}, present a direct challenge to this viewpoint. 


Both skeptical directions remain within the framework of quantum mechanics. 
The noise-sensitivity point of view, however, suggests a nonstandard
interpretive principle about what a complex noisy quantum experiment
represents. In the ideal circuit model, a quantum experiment is described by a
single probability distribution on outcomes. In a realistic noisy
implementation of a sufficiently complex quantum process, I propose that the
relevant object is instead a cloud of probability distributions, reflecting
small uncontrollable perturbations of the state, the device, and the
environment.

In this view, noise sensitivity is not merely a technical property of certain
ideal probability distributions. It is an inherent feature of complex quantum
evolutions. The outcomes of such an experiment are of course random, but the
probability distribution governing them is not a stable, well-defined physical
object independent of microscopic details. Rather, small changes in the noise,
calibration, or environment must lead to substantially different output
distributions. Thus the ideal noiseless distribution is not only hard to sample
from; it is physically unstable as an object of description.


\section{Statistical Analysis of Experimental Quantum Advantage Claims}
\label{s:stat}

Beginning in 2019, following Google's announcement of quantum supremacy
\cite{Aru+19}, I initiated, together with Yosi Rinott and Tomer Shoham, a
program aimed at developing statistical tools for analyzing samples produced by
quantum experiments. More recently, Carsten Voelkmann joined this effort.
Unlike the Fully Depolarizing Noise Conjecture and the noise-sensitivity
approach described in the previous section, this line of work focuses directly
on experimental data and on the statistical methodology used to support claims
of quantum computational advantage.

Our starting point was the observation that the central quantity used in
random-circuit-sampling experiments, the linear cross-entropy benchmark (XEB),
provides only limited information about the underlying output distributions. In
particular, a high XEB score does not imply that the experimental distribution
is close, in any strong sense, to the ideal quantum distribution. Indeed, our analyses indicate that the deviation of the empirical distribution from the distribution described by Google's noise model is substantial.  

This observation led to a detailed statistical analysis of the data and
methodology underlying the Google supremacy experiment and related experiments
\cite{RSK22,KRS22d,KRS23,KSV25}. We emphasized the importance of transparency,
data availability, and careful statistical validation. Among the concerns
raised in our papers were the possibility that calibration methods may lead to
``classical-computing interference (see below).'' Other concerns involved the apparently unreasonably small reported gaps between XEB scores and a priori predictions, and the absence of some crucial raw data needed to check these matters independently.


A recurring theme in this work is the distinction between \emph{verification}
and \emph{validation}. Verification concerns whether the observed data are
statistically consistent with a proposed model. Validation concerns whether the
model itself accurately captures the physical process generating the data.
Statistical consistency with a given benchmark does not by itself establish
that the benchmark is the appropriate one. This is one reason why independent
access to raw experimental data and documentation is so important.

Unlike the FDNC, which predicts a specific
structural feature of physical noise, and unlike the noise-sensitivity approach,
which concerns the computational complexity of noisy quantum systems, the
statistical approach is largely agnostic regarding the underlying physics. Its
goal is more modest: to provide rigorous statistical tools for assessing the
strength of experimental evidence and to clarify what conclusions can and
cannot be drawn from currently available data.

At the same time, statistical analysis may eventually provide a useful
framework for testing broader physical hypotheses, including the FDNC. One possible approach is to augment standard
noise models with an additional correlated two-qubit depolarizing component and
then determine, through statistical fitting, whether experimental data support
such a modification. In this way, statistical methodology may serve as a bridge
between experimental observations and conjectures concerning the structure of
physical noise.

\paragraph{Classical-computing interference.} By \emph{classical-computing interference} I mean the possibility that classical computation used in the design, calibration, selection, processing, or validation of a quantum experiment affects the final reported samples or benchmarks in a way that is not part of the intended quantum sampling process. This does not refer to the ordinary classical control required to operate a quantum device. Rather, it refers to situations in which substantial classical computation, using information about the ideal circuit or its predicted output statistics, may influence which circuits are run, how the device is calibrated, which data are retained, or how the reported benchmark is obtained. In such a case, the observed agreement with the ideal quantum distribution may partly reflect classical information inserted into the experiment, rather than being solely a consequence of the physical quantum evolution.

\paragraph{Assessment of progress.}
A central question is how to assess current progress in experimental quantum
computing. The following benchmark seems important:
\begin{quotation}
Is it currently possible to produce, without classical-computing interference,
samples of size 500K for depth-14 random circuits with 20 qubits and XEB
fidelity above 0.2?
\end{quotation}
This, and much stronger assertions, were claimed by Google already in 2019. 
(In our question we refer to similar circuits to those used by Google.) Google and other groups have made stronger and
stronger claims over the years.\footnote {This benchmark is especially appropriate for superconducting quantum computing. We note that Quantinuum \cite {DHL+25} and other companies report even lower fidelity values than those reported by Google, for the XEB values of very short samples (size around 50) averaged over thousands of circuits.} 
However, in my judgment, and based on our work,
there is no conclusive evidence for a positive answer even for this benchmark.
This benchmark should be much easier than more ambitious and widely discussed
benchmarks, such as high-quality logical qubits and gates, or convincing
demonstrations of quantum advantage.

\section{Further connections to physics}
\label{s:phys}

I do not expect a mathematical theorem proving the impossibility of quantum
computers, nor do I foresee a derivation of their impossibility from a
universally accepted physical principle. However, in a world without scalable
quantum fault tolerance, or even in restricted situations where quantum fault
tolerance does not occur, the operational scope of certain physical principles
may extend beyond what is strictly implied by their formal mathematical
statements. In this section I describe four examples.

\subsection{The time-energy uncertainty principle}

Robert Alicki, in his early critical papers on quantum computers
\cite{Ali00a,Ali00b}, relied on a version of the time-energy uncertainty
principle. The time-energy uncertainty principle has a subtle status. It was
challenged already by Aharonov and Bohm \cite{AhaBoh61}, while Aharonov,
Massar, and Popescu later established a rigorous version for the problem of
measuring or estimating a completely unknown Hamiltonian \cite{AMP02}.

Atia and Aharonov \cite{AtiAha17} gave a very different perspective. They
showed that violations of time-energy uncertainty bounds are closely connected
with Hamiltonian fast-forwarding, and that examples based on Shor's algorithm
can lead to exponentially precise measurements. From their point of view, a
naive universal form of the time-energy uncertainty principle is a
misconception.

There is, however, an important qualification. The strong violations suggested
by quantum algorithms rely on the possibility of implementing the relevant
quantum computation with sufficient accuracy. In a world without scalable
quantum fault tolerance, the operational content of the time-energy uncertainty
principle may therefore be broader than what is suggested by the ideal quantum
circuit model.

\subsection{The no-cloning principle}

The no-cloning theorem is a basic theorem of quantum information theory:
one cannot clone a completely unknown quantum state \cite{WooZur82,Die82}.
The usual theorem, however, does not forbid producing many copies of a
\emph{known} quantum state. If a state is described by a circuit, one may try
to run the circuit repeatedly and thereby produce as many copies as desired.

But for complex quantum states this conclusion relies on the physical
realizability of the circuit with sufficiently small accumulated error. If
scalable fault tolerance is unavailable, then many complex known quantum states
may not be reproducible in practice, even when they have a concise formal
description. Thus in a world without fault-tolerant quantum computation, the
no-cloning theorem may be accompanied by a stronger operational principle:
complex quantum states cannot in general be reliably reproduced, even when
their ideal descriptions are known.

This is not a mathematical strengthening of the no-cloning theorem within
standard quantum mechanics. Rather, it is an operational strengthening that
would follow from the failure of scalable fault tolerance.

\subsection{Thermodynamics}

Noiseless quantum evolution is reversible, and in the ideal circuit model one
can run complicated quantum processes forward and backward. At first sight, this could be seen as being in tension with the laws of thermodynamics. A specific
thermodynamic question is whether the physical assumptions needed for scalable
fault-tolerant quantum computation are compatible with the thermodynamic
constraints of open quantum systems.

Alicki, Lidar, and Zanardi \cite{ALZ06} argued that several assumptions entering
standard fault-tolerance theory---fast gates, a constant supply of fresh cold
ancillas, and a Markovian bath---may not be mutually consistent in light of
rigorous derivations of Markovian dynamics. Alicki developed related criticisms
in other works, arguing that thermodynamic considerations may impose inherent
limitations on quantum information processing.\footnote {A related philosophical critique of the physical meaning of the threshold
theorems, emphasizing the open-system character of fault-tolerant quantum error
correction, was given by Hagar \cite{Hag09}.}

These arguments did not convince the broader quantum-information community,
and there are substantial counterarguments to them. Still, if an independent
argument were to show that scalable quantum fault tolerance is impossible, it
would give renewed significance to the thermodynamic concerns raised by
Alicki and his collaborators.

\subsection{Non-Abelian anyons and stable topological qubits}

There is a mathematical-topological reason why the quantum statistics of
identical particles is different in two spatial dimensions than in three. In
three dimensions, the exchange group is essentially the symmetric group, leading
to the familiar bosonic and fermionic possibilities. In two dimensions, the
braid group replaces the symmetric group, allowing anyonic statistics
\cite{LeiMyr77,Wil82a,Wil82b}.

This mathematical possibility is the basis for topological quantum computation.
In particular, non-Abelian anyons could in principle store quantum information
nonlocally and perform quantum gates through braiding, thereby offering a
naturally protected route to fault-tolerant quantum computation
\cite{Wil82b,MooRea91,Kit03,Nay+08}.  

In a world where scalable quantum fault tolerance is inherently impossible,
the same principles that obstruct quantum fault tolerance may also obstruct
the physical realization of stable non-Abelian anyons suitable for quantum
computation, including proposed realizations based on Majorana zero modes.
This would not mean that the three-dimensional classification theorem
literally extends to two-dimensional physics. Rather, it would suggest that
some of its operational content may persist more broadly.

\section{Yuri and I: Three Broader Reflections}
\label {s:yuri}

\begin{flushright}
\begin{minipage}{0.70\textwidth}
\raggedleft
\includegraphics[width=\textwidth]{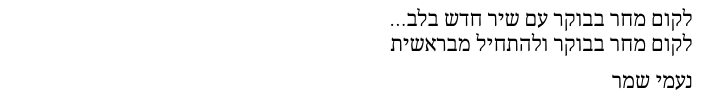}\footnotemark
\end{minipage}
\end{flushright}
\footnotetext{``To wake up tomorrow morning with a new song in one's heart... to wake up tomorrow morning and begin anew'' --Naomi Shemer}


In 1991 I met Yuri at IBM Almaden. I spent a year there, and Yuri came to give
a lecture on average-case complexity, an exciting area of computational
complexity. We had some pleasant conversations about the topic. Earlier, as a
graduate student, I knew Yuri by name and had seen him several times in
Jerusalem.

In the late 1990s I became a frequent visitor at the mathematics group at
Microsoft Research, and Yuri and I became friends. I was always fascinated by
his path from logic, through theoretical computer science, to very practical
questions about computers and algorithms, and especially by his theory of
abstract state machines \cite {Gur95,Gur00}. At that time neither of us was working on quantum
computing.

I started working on quantum computing in 2005. In 2012 I had a public debate
on the topic with Aram Harrow. I remember that Yuri criticized my hesitant and apologetic style, and his remarks gave me food for thought: “You do not have to say ‘in my opinion’ 
endlessly; everybody knows that you are expressing your opinion.'' 
Shortly afterwards, Yuri himself
joined the quantum group at Microsoft and conducted research in this direction.
Over the years Yuri grew sympathetic to my points of view, both on
why quantum computation is doomed to fail and on why Google's 2019 experimental
claims do not hold water.

\begin{figure}[]
	\begin{center}
        \includegraphics[width=13.2cm, height=9.9cm]{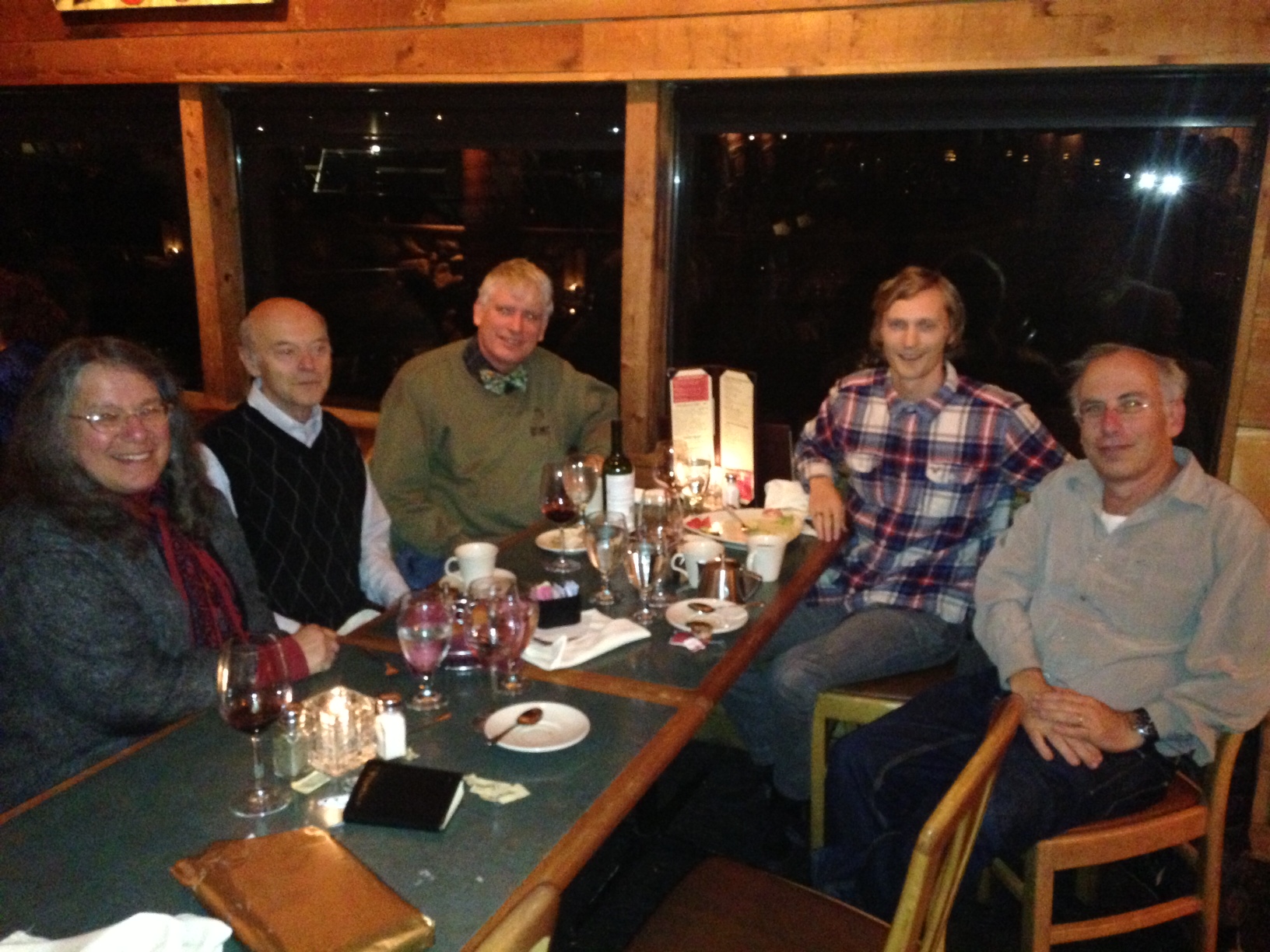}
		\caption{With Connie Sidles, Yuri Gurevich, John Sidles, and Rico Picone, in Seattle, 2013.}
		\label{fig:YS}
	\end{center}
\end{figure}

Both Yuri and I have long been interested in the relevance of theory --- especially theoretical computer science --- to practice. We have also both been
interested in the use of statistical and logical tools to detect mistakes,
implausibilities, and even deception. More recently, like almost everyone else, we have become interested in the interplay between AI on the one hand and mathematics and natural science. I will devote a few
paragraphs to each of these topics, and then end with a more personal reflection
on Israel.

\subsection{Reflection on the relevance of theory}

By the relevance problem, I mean the gap between theoretical models and the
real-world systems they are meant to describe. This issue arises when
theoretical analyses fail to capture essential features of the phenomena they
aim to explain, whether in the natural sciences, computer science, or the social
sciences. I devoted my 2018 ICM paper \cite {Kal18} to the interface between theory and
practice in three topics: linear programming, games, and quantum computers.
Let me mention here three other examples.

The computational complexity of matrix multiplication is one of the most
famous problems in theoretical computer science, and matrix multiplication is
also one of the most important practical tasks in computing. This is an area
where theory and practice interact in subtle ways: asymptotic improvements in
the exponent of matrix multiplication are among the great achievements of
complexity theory, but practical performance also depends on constants,
numerical stability, memory access, architecture, and the size regime in which
the algorithm is actually used. (See, for example, \cite{Vas12,SchZwe25}.)

Parallel computation gives another example of a persistent tension between
theory and practice (and within theory). Theoretical models such as PRAMs capture some of the
power of massive parallelism, but real parallel computation is constrained by
communication, synchronization, memory hierarchy, locality, and energy. Thus
parallel computing is a useful reminder that even very successful theoretical
models may illuminate only part of the practical computational landscape. (See, for example, \cite {VisWig85}.)\footnote{Yuri Gurevich commented: ``I think there is another, orthogonal factor though. In PRAM and other models, all actors work on the same abstraction level. This isn't so in practice. The same abstraction level is a huge simplification.''}

The third question is more philosophical. It can be expressed as follows:
are the class $\mathbf{NP}$ and the problem $\mathbf{P}\ne\mathbf{NP}$ relevant
to the actual practice of mathematicians proving theorems, and to what one
might call the ``creativity gap''?

This question has represented an ongoing debate between Avi Wigderson and me
for the last couple of decades. Avi advocates \cite {Wig06,Wig19} a strong affirmative answer and
believes that the class $\mathbf{NP}$ gives a good description of what mathematicians are trying to prove and of mathematical
proofs: proofs may be hard to find but easy to verify. I tend toward a more
negative answer. In my view, human mathematical activity is constrained by
feasible processes, and the $\mathbf{P}\ne\mathbf{NP}$ problem can at best be
regarded as a metaphor for the gap between verification and discovery. The
progress in formal proof verification gives some support to Avi's position,
since it emphasizes the distinction between finding a proof and checking one.
On the other hand, the more recent progress of AI in mathematics may be read
as lending some support to my position, by showing that aspects of mathematical
creativity may be more algorithmic than we had imagined.\footnote {Yuri Gurevich commented: ``I was there in the 1960s when people thought that decidability means feasible decidability, and undecidability means that things are certainly infeasible in practice. That proved to be naive. The super theory of the real field is decidable (by Tarski) but this buys us zilch. And relatively simple tools successfully solve the halting problem for C programs.''}

Highly optimistic interpretations of theoretical models are common in
theoretical computer science, theoretical physics, theoretical economics, and quite a few other academic
areas. Very optimistic assessment of progress are common among individuals and organizations in all areas of life. We can ask whether this level of optimism is a negative phenomenon, because it may obscure practical limitations and policymaking, or rather an instrumental tool for progress, because it pushes theory beyond what is currently achievable. 

\subsection{Reflection on the replication crisis and on catching lies with statistics}

The replication problem refers to the difficulty of reproducing published
scientific results when experiments are repeated independently. In many fields,
results that initially appeared convincing have later proved difficult, or even
impossible, to replicate.\footnote {When it comes to quantum computers, there are concerns regarding scientific claims of two technology giants. We already mentioned doubts regarding Google's ``quantum supremacy claims,'' and there are also concerns regarding Microsoft's claims about topological quantum computing.} In some cases, concerns regarding deception have also
been raised.

The replication crisis is a serious problem in several areas of science, where
scientific claims, and occasionally pseudo-scientific claims, cannot always be
trusted. The question of using statistical and mathematical tools to detect
mistakes, biases, and deceptions is both interesting and important. One source
of concern is the presence of ``too good to be true'' features in experimental
or theoretical claims. There are many famous historical examples.

This theme is closely related to Yuri's paper with Grant Olney Passmore,
\emph{Impugning Randomness, Convincingly} \cite{GurPas12}. Their paper asks how
one can convincingly challenge the alleged randomness of an event in a
real-world setting. They discuss, among other things, the role of a focal event,
the null hypothesis, and the need for the suspicious event to be specified in a
way that is sufficiently independent of the outcome. This is very close in
spirit to the statistical questions that arise when an observation looks
implausibly accurate, implausibly successful, or otherwise too good to be true.

Here is a nice exemplifying story raised by Ehud Friedgut.

\begin{quotation}
``A man claims to be able to hit a globe hanging 200 meters away with a bow and
arrow while blindfolded. An experiment is set: he shoots two arrows, a few
minutes apart, and then sends his son to fetch the globe, which is too far from
the other observers' sight. The son returns with a globe and two arrows stuck
into it as close as physically possible. This level of accuracy makes it even
harder to believe the integrity of the experiment, but we cannot yet prove our
suspicion. Now assume that we learn that while the arrows were shot, the globe
was rapidly spinning around its axis, without the father's and son's knowledge.
This means that, regardless of the father's archery skills, the longitudes of
the two arrows should be distributed uniformly. Therefore, while it is still
possible that the two arrows will end up adjacent, this would happen with
extremely low probability, and we can view their position as probabilistic
evidence that the experiment was rigged.''
\end{quotation}

This description fits rather well an argument I raised in the late 90s 
regarding the proximity
of the $p$-values of two ``Bible code'' experiments. It also fits the situation
regarding the surprisingly close agreement between some {\em a priori}  fidelity
predictions and experimental findings in Google's 2019 quantum supremacy
experiment. In both cases, the issue is not merely whether a certain event has
small probability, but whether the relevant small-probability event was
identified in a way that makes the probabilistic argument convincing.

It is often argued that the recent AI tools may worsen the replication crisis and may lead to poor-quality research. We note that, on the other hand, AI can be instrumental in performing a variety of routine, thankless, delicate, and yet important tasks, including the careful scrutiny of previous experimental and theoretical research claims. AI can be useful for scrutinizing experimental claims that depend on many unknown experimental facts, for routinely collecting raw data from authors, for checking computations reported in papers, for performing necessary
statistical tests, and for reporting the findings. In the context of fault-tolerance mathematical theorems, mathematical verification platforms can be instrumental in formal validation of crucial mathematical claims, with particular attention to the constants and to the precise assumptions on the noise model. 



\begin{figure}[]
	\begin{center}
    \includegraphics[width=0.8\textwidth]{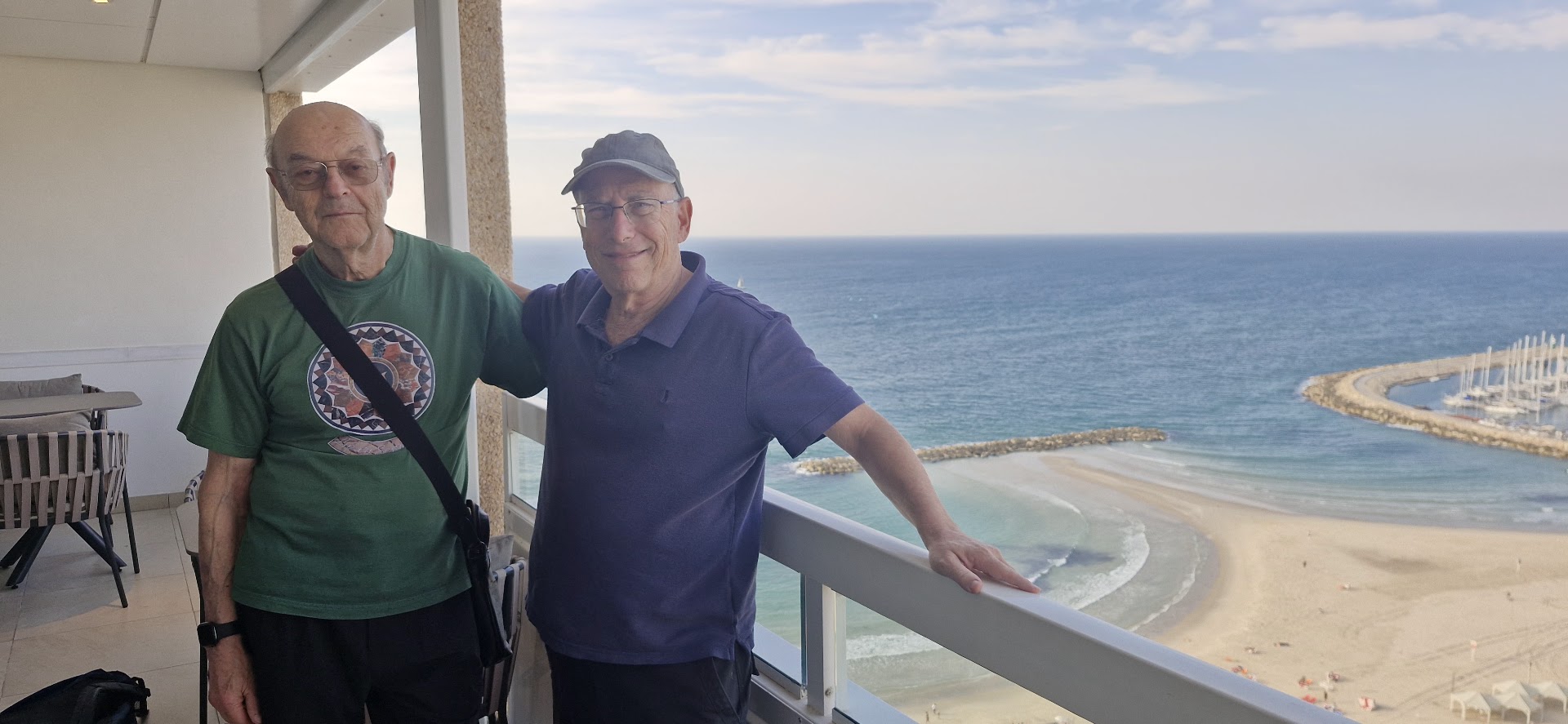}
		\caption{Yuri and I near the beach of Tel Aviv, fall 2025.}
		\label{fig:YG}
	\end{center}
\end{figure}

\begin{figure}[]
	\begin{center}
        \includegraphics[width=0.55\textwidth]{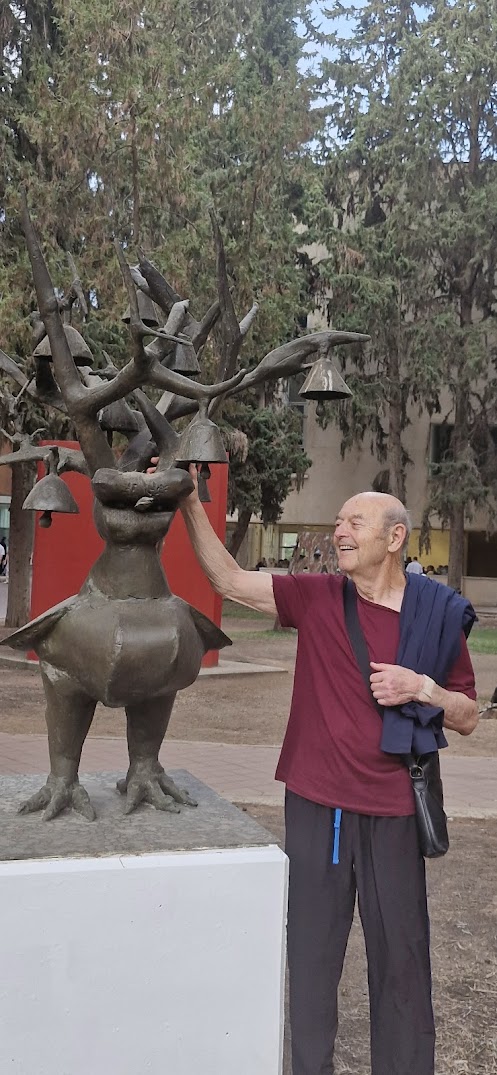}
		\caption{Yuri at the Reichman University sculpture garden, fall 2025.}
		\label{fig:YGsg}
	\end{center}
\end{figure}

\subsection{Reflection on Israel}

Finally, Yuri and Zoe regard Israel as their home, and Yuri and I share
a love, not without criticism, for the country.

Some time ago Yuri and Zoe sent me holiday greetings with a picture of
beautiful beaches. I assumed they were from the US, where they live. ``Those are
beautiful beaches,'' I wrote Yuri in response, ``and they even resemble a little
our beaches here in Tel Aviv.'' ``Gil,'' Yuri responded, ``the picture is from
Tel Aviv.''

\FloatBarrier
\section{Conclusion}
\label {s:conc}

The Fully Depolarizing Noise Conjecture was proposed twenty years ago as a structural stress test for quantum computing---a condition that scalable quantum devices must overcome. It does not attempt to describe a specific microscopic mechanism. Rather, it imposes a constraint on the effective noise channel: whenever physical qubits can generate entanglement---or localizable entanglement---correlated fully depolarizing noise must appear at linear order.

Whether this structural constraint reflects a fundamental limitation of quantum devices, or whether it will ultimately be refuted by experiment, remains an open question. The answer lies mainly in precise experimental scrutiny together with careful theoretical modeling and analysis.

My later work on quantum computers was based on computational-complexity
arguments that suggest theoretical limits to engineering efforts to reduce the
rate of noise. In recent years, I developed statistical tools to examine models for quantum noise and to scrutinize experimental claims. 

In the last thirty years, my academic and personal path crossed Yuri's path and we found many areas of common interest. I greatly cherish our friendship.

\section*{Acknowledgements}

I am grateful to Craig Gidney for several crucial discussions, in particular
for using the Stim simulator to exhibit a four-qubit example showing that FDNC
is incompatible with standard noise models in which qubit and gate errors are
statistically independent, and for pointing out the subtlety that, for an
isolated Bell state, single-qubit full depolarization and joint two-qubit full
depolarization give the same output density matrix. I also thank Sam Cohen, Michael
Geller, Yuri Gurevich, and Carsten Voelkmann for helpful discussions, comments,
and corrections. I used ChatGPT (OpenAI) as an AI-assisted tool in preparing this paper.

\section *{Appendices}

\appendix

\section{Some counterarguments and responses}
\label{s:counter}

\subsection{Is the conjecture mathematically and physically well defined?}
\label{s:counter1}

\subsection*{1) ``The conjecture does not describe a legitimate quantum channel.''}

The conjecture does not specify a complete noise model. Rather, it imposes a
constraint on whatever the correct physical noise model is. When two physical
qubits are entangled, or have positive localizable entanglement, the conjecture
asserts that the effective noise channel contains a joint fully depolarizing
component acting on this pair.

Even if universally true, the physical mechanisms leading to the conjectured
behavior may differ between implementations. The conjecture is structural, not
mechanistic.

\subsection*{2) ``The conjecture is too vague; it does not explicitly describe the noise channel. It also does not describe the physical source of the noise and its exact modeling.''}

This is partially true. The conjecture does not give a complete microscopic
description of the noise. It does not identify the bath, the Hamiltonian
couplings, leakage mechanisms, crosstalk, or other physical sources.

Rather, it proposes a structural constraint on the effective noise channel.
Testing the conjecture experimentally would require identifying in experimental
data specific correlated fully depolarizing components. Supporting it
theoretically would require fine-grained modeling of concrete physical systems.

\subsection*{3) ``As long as \(t>0\) is unspecified, the conjecture might
always remain open.''}

The intended assertion is not merely that some arbitrarily small positive
\(t\) exists. I conjecture that the joint fully depolarizing weight is
comparable to the corresponding elementary two-qubit noise scale.

The contrast with standard fault-tolerant noise models can be much sharper
than the direct-versus-indirect example of Section~2.2. Suppose, heuristically,
that a pair of physical qubits interacted \(S\) error-correction rounds ago
and that this interaction created a correlated error component of order
\(t\). In the standard fault-tolerant picture, subsequent correction can
suppress the residual contribution of this old fault exponentially with
\(S\); schematically one may expect a contribution of the form
$
t(1-c)^S,$
for some $c>0$.
(This formula is meant to illustrate the standard suppression mechanism, not
as a universal consequence of every threshold theorem.)
Since a circuit layer contains only \(O(n)\) two-qubit gates whereas there
are \(\Theta(n^2)\) pairs of physical qubits, most pairs have not interacted
directly for many rounds. More precisely, only \(O(nS)\) pairs can have
interacted during the preceding \(S\) rounds. Thus, on the standard picture,
for most pairs an error correlation inherited from their last direct
interaction can become extremely small. 

FDNC predicts a very different behavior. If such a pair presently has
substantial entanglement or localizable entanglement, the conjecture asserts
a present joint fully depolarizing component of order \(t\), rather than one
that remembers only a distant direct interaction and has decayed to
\(t(1-c)^S\). This is one reason the conjecture would be difficult to
reconcile with scalable fault tolerance.

\subsection {Relation to quantum mechanics and causality}

\subsection*{4) ``The conjecture violates linearity of QM. It is possible that it will apply to one initial state of your quantum computer but not to another one.''}

This is incorrect, although the point is interesting. As I wrote above, the conjecture proposes {\it constraints} on the noise channel rather than a complete description. 
If the same physical circuit is run with two different initial conditions, and the preparation of these initial conditions makes no physical difference to the noise, then the same noise channel should apply in both cases. In that situation, if the conjecture forces correlated depolarizing errors for one initial condition, the same effective channel also acts on the other one. There is no violation of linearity.

\subsection*{5) ``The noise (or Nature) cannot `know' if the state is entangled or not. Entanglement cannot cause correlations for the noise.''}

The conjecture does not assume that noise ``detects'' entanglement or that entanglement directly ``causes'' correlation. It asserts that the physical processes required to generate entanglement inevitably produce correlated noise.

\subsection*{6) ``The conjectured noise resembles nonphysical random-unitary models.''}

This is an important point. The need to consider noise models which resemble the behavior of random unitary operators (or, in other words, to allow inaccuracies of general form) was suggested by early skeptical views. (For example, by Leonid Levin \cite {Lev03} and John Baez.)  

Quite a few researchers argued that it is unreasonable to assume that errors for general quantum states achieved by quantum computers are limited to very restricted error operations that respect the product structure of the quantum computer.   
Proponents of quantum computers argued that while ``random unitary'' errors do not violate the postulates of quantum mechanics, they are  {\it unphysical}: they violate how quantum physics is believed to be mapped into quantum mechanics. Proponents also argue that for some nice quantum states we certainly cannot expect such noise behavior to hold. 

Based on this early discussion, the possibility of a systematic relation between the noiseless state and the noise was raised and discussed in my 2005 paper \cite {Kal05} (my first paper on the topic) and over the years has led to interesting heated discussions.

My view is the following. I accept the standard objection that an effective noise model resembling a random unitary operation on a Hilbert space of very large dimension is unphysical. But this does not by itself refute the conjecture. It may instead show that the hypothetical large-scale entangled state to which such a noise model would apply is itself physically unattainable. 

Thus the issue is somewhat counterfactual. If one assumes that a distance-15 surface-code state with the noise properties required for fault tolerance can be physically created, then FDNC may imply an effective noise channel that standard microscopic noise modeling would regard as unphysical. My conclusion from this is not that FDNC is wrong, but rather that the assumed state, with the required fault-tolerant noise behavior, is not physically available. Figure~\ref{fig:fig9}, taken from \cite{Kal11}, was meant to illustrate precisely this point.

\begin{figure}
	\begin{center}
\includegraphics[width=0.55\textwidth]{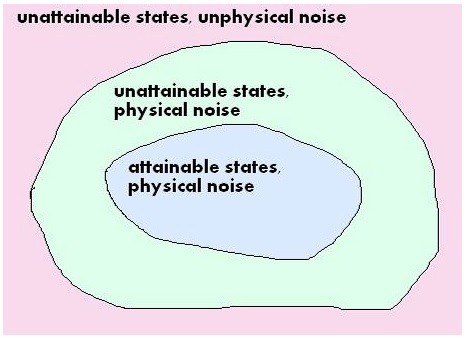}
		\caption{Given a proposed architecture for a quantum computer, it is possible that for some hypothetical states that cannot be achieved, the proposed properties of noise are ``unphysical.'' The place to examine the conjectures is for attainable states.}
		\label{fig:fig9}
	\end{center}
\end{figure}

\subsection {Relations to quantum fault tolerance} 

\subsection*{7) ``This may be true for physical qubits, but it is false for logical qubits.''}

Yes --- the conjecture refers to \textbf{physical qubits}.

If the conjecture holds, it lowers the prospects for achieving reliable logical qubits. In the standard fault-tolerance picture, logical qubits are protected by using highly entangled states of many physical qubits while suppressing the effective logical error rate. FDNC predicts instead that the underlying physical pairs with substantial localizable entanglement carry correlated fully depolarizing components. Thus sufficiently good logical qubits would be hard to reconcile with FDNC at the physical level.

\subsection*{8) ``The threshold theorem extends to general long-range noise.''}

Yes --- but these extensions still violate the conjecture. Mathematically speaking, many small long-range interaction terms are not the same as imposing substantial correlated depolarization.

\subsection*{9) ``If the error rate is $p$, the conjecture implies $\Omega(n^2p)$ errors per round. (We expect $\Omega(np)$ errors per round.)''}

This intuition captures an important feature of the conjecture.

An elementary fault occurring with probability \(t\) and initially
acting on one qubit contributes \(t\) to the expected number of affected
qubits. If subsequent evolution spreads this fault over \(cn\) qubits,
where \(c>0\) is a constant, its accumulated contribution becomes \(cnt\).
The fault probability remains \(t\), and noiseless unitary propagation
preserves the corresponding trace-distance deviation, which is at most
\(t\). Thus the same injected noise can produce a factor-\(n\) increase
in its accumulated qubit-error burden. See Section \ref {s:er}.

\subsection*{10) ``This kind of noise is likely to `prove too much' by also destroying classical fault tolerance.''}

As stated, the conjecture has no direct bearing on classical computers.

\subsection*{11) ``The correlation conjecture (and the earlier line of research from 2005) has no direct bearing on topological quantum computing.''}

Right. However, the arguments extend indirectly also to the proposal of topological quantum computing. In a world where scalable quantum fault tolerance is inherently impossible, the same principles that obstruct quantum fault tolerance may also obstruct the physical realization of stable non-Abelian anyons suitable for quantum computation.

\subsection {Experimental objections}

\subsection*{12) ``Experiments show that if a circuit has \(T\) gates, each
with accuracy \(p\), then the overall circuit accuracy falls approximately
as \(p^T\), just as it would for independent errors. Doesn't this show that
the correlated noise predicted by the conjecture is absent?''}

No.

In the independent-error model, \(p^T\) is the probability that none of
the \(T\) gates suffers an error. Even if the measured circuit accuracy
correctly estimates this probability, it does not determine the structure
of the noise conditional on an error \emph{having occurred}. Thus observing
a \(p^T\)-type decay is compatible with the correlated errors
addressed by my conjectures.

\subsection*{13) ``I doubt you'll get much interest in doing an experiment specifically for this purpose, because it's seen as not being worth the time due to being so unlikely. You're probably limited to running things on IBM's quantum cloud for yourself or to analyze statistics from existing experiments for yourself.''}

This is a fair assessment. 

\subsection*{14) ``There is the major complicating factor that there are correlated mechanisms in these machines, like cosmic ray hits, and at small scales it's unclear how you'd separate those from some legitimately dangerous-to-fault-tolerant-quantum-computation correlated error.''}

My conjecture is not separate from known correlated mechanisms, and it may well
be explained partially, or even largely, by known mechanisms that are regarded as
engineering difficulties. (This does not apply to cosmic rays.) What I claim is
that these seemingly engineering obstacles may reflect an inherent obstacle,
and the challenge was to formulate this inherent obstacle mathematically, as
simply as possible.

\subsection {Classical simulations for realistic quantum evolutions}

\subsection*{15) ``To show that quantum computational advantage is impossible,
it is not enough to demonstrate that realistic models of noise do not permit
quantum fault tolerance. One must also explain how realistic quantum evolutions
can be simulated efficiently on classical computers.''}

This is a fair point. For many decades there have been efforts, with notable
successes, to find classical simulations for realistic quantum evolutions.
It is also reasonable to expect that AI tools may be useful for finding
implicit or heuristic classical descriptions of realistic quantum evolutions
beyond the current state of the art. I agree that the impossibility of quantum
advantage would go hand in hand with an optimistic view regarding the scope
and success of classical algorithms for realistic noisy quantum systems.

In my view, identifying physically plausible noise models that rule out quantum
fault tolerance, and obtaining experimental support for such models, would
already go a long way toward demonstrating that scalable quantum computation is
impossible. Such efforts may also be relevant to classical simulation. In
particular, for noisy Boson Sampling, the skeptical arguments of
Kalai--Kindler \cite{KalKin14} led to efficient classical approximation
algorithms in the relevant noisy regime.

Extending the scope of FDNC and of the broader signal--noise principles
discussed in Section~\ref{s:broader} may also be useful in this direction. If
realistic noise is constrained by the structure of the ideal evolution, then
the effective noisy evolution may have lower computational complexity than the
ideal evolution.

\section{A Probabilistic Model for Error Synchronization}
\label{s:moment}

In this appendix we analyze a simple probabilistic model that illustrates
how pairwise correlations can force a substantial probability of simultaneous
errors. The model is not intended to describe a complete noise process. Rather,
it isolates one feature of the Fully Depolarizing Noise Conjecture: pairwise
joint error probabilities of order $t$ can imply synchronized error events
involving a linear number of qubits.

\subsection{Setup}

Let
\[
X=(x_1,\dots,x_n)\in\{0,1\}^n
\]
be a random vector, where $x_i=1$ means that qubit $i$ is hit by an error.
We impose the following symmetric pair law: for every $i\neq j$,
\begin{equation}\label{eq:pairlaw}
\begin{aligned}
\Pr(x_i=0,x_j=0)&=1-t,\\
\Pr(x_i=0,x_j=1)
=\Pr(x_i=1,x_j=0)
=\Pr(x_i=1,x_j=1)&=\frac{t}{3}.
\end{aligned}
\end{equation}
Here $0\le t\le \min(1, 3/4 + 3/(4n))$.\footnote {Such a random vector does not exist for all \(t\in[0,1]\); positive semidefiniteness of the covariance matrix already implies
that 
$t\le \frac34+\frac{3}{4n}.$
This issue has no effect on the low-noise regime considered below.} Thus each individual qubit is hit with probability
\[
\Pr(x_i=1)=\frac{2t}{3},
\]
while each pair is jointly hit with probability
\[
\Pr(x_i=1,x_j=1)=\frac{t}{3}.
\]
For small $t$, this is much larger than what one would obtain from independent
errors with the same one-qubit marginal. Indeed, independence would give
\[
\Pr(x_i=1,x_j=1)=\left(\frac{2t}{3}\right)^2=\frac{4t^2}{9},
\]
whereas \eqref{eq:pairlaw} gives a joint probability of order $t$.

Let
\[
S=\sum_{i=1}^n x_i
\]
denote the total number of qubits hit by errors.

\subsection{Moment identities}

The pair law \eqref{eq:pairlaw} determines the first two factorial moments of
$S$. Namely,
\[
\mathbb E S=\frac{2tn}{3},
\]
and
\[
\mathbb E[S(S-1)]
=
\sum_{i\neq j}\Pr(x_i=1,x_j=1)
=
\frac{n(n-1)t}{3}.
\]
Thus the expected number of errors is of order $nt$, while the expected number
of ordered pairs of errors is of order $n^2t$. Equivalently,
\[
\mathbb E\binom{S}{2}
=
\binom n2 \frac{t}{3}.
\]

The next theorem shows that these two moment identities already force a
probability of order $t$ for a linear-size error event.
\begin{theoremrestated}
Let $X=(x_1,\dots,x_n)\in\{0,1\}^n$ be a random vector satisfying
\eqref{eq:pairlaw} for every $i\neq j$.
Let $S=\sum_i x_i$. For an integer $K$ satisfying
\[
2\le K<\frac{n+3}{2},
\]
define
\[
m_{n,t}(K)=\min \Pr(S\ge K),
\]
where the minimum is over all distributions satisfying the pair constraints.
If
\[
t\le \frac{3(K-1)}{n+2K-3},
\]
then
\[
m_{n,t}(K)=
\frac{t(n-2K+3)}{3(n-K+1)}.
\]
\end{theoremrestated}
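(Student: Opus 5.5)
The plan is to treat this as a two-moment problem, since the pair law \eqref{eq:pairlaw} enters only through the two moment identities recorded above. The lower bound will come from a quadratic polynomial lying below the indicator $\mathbf 1[s\ge K]$ on $\{0,1,\dots,n\}$. The matching upper bound will come from an exchangeable distribution supported on the points where that polynomial touches the indicator.

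\emph{Lower bound.} I will use $f(s)=c\,[\,s(s-1)-(K-2)s\,]=c\,s\,(s-K+1)$ with $c=1/(n(n-K+1))$. Its properties are the following.
\begin{itemize}[leftmargin=2em]
\item For integers $0\le s\le K-1$ we have $f(s)\le 0$, with equality at $s=0$ and $s=K-1$.
\item On $s\ge K$, $f$ is increasing, and $f(n)=c\,n(n-K+1)=1$. Hence $f(s)\le \mathbf 1[s\ge K]$ for all $s\in\{0,\dots,n\}$.
\end{itemize}
Taking expectations and using $\mathbb E[S(S-1)]=n(n-1)t/3$ and $\mathbb E S=2tn/3$ gives
\[
\Pr(S\ge K)\ \ge\ \mathbb E f(S)=c\,\frac{tn}{3}\bigl(n-1-2(K-2)\bigr)=\frac{t(n-2K+3)}{3(n-K+1)} .
\]
The hypothesis $K<(n+3)/2$ makes this bound positive. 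The choice $\lambda=K-2$ in $s(s-1)-\lambda s$ is forced: it is the smallest $\lambda$ for which the polynomial is nonpositive on $\{0,\dots,K-1\}$.

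\emph{Upper bound (construction).} Complementary slackness suggests putting mass only on $S\in\{0,K-1,n\}$. Let $X$ be a uniformly random subset of size $K-1$ with probability $q$, the all-ones vector with probability $r$, and the zero vector otherwise. This distribution is exchangeable. For such a distribution, $\Pr(x_i=1)=\mathbb E S/n$ and $\Pr(x_i=1,x_j=1)=\mathbb E[S(S-1)]/(n(n-1))$, and these two numbers determine the full law of every pair. So it suffices to impose the two moment equations
\[
q(K-1)+rn=\frac{2tn}{3},\qquad q(K-1)(K-2)+rn(n-1)=\frac{n(n-1)t}{3}.
\]
Eliminating $q$ gives $r=t(n-2K+3)/(3(n-K+1))$, which is exactly the claimed value, and then $\Pr(S\ge K)=r$. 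It remains to check feasibility.
\begin{itemize}[leftmargin=2em]
\item $q\ge 0$ holds because $r/t<1/3<2/3$.
\item $r\ge 0$ holds because $K<(n+3)/2$.
\item $q+r\le 1$ should, after a short computation, reduce precisely to the hypothesis $t\le 3(K-1)/(n+2K-3)$.
\end{itemize}

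I expect the main obstacle to be this last verification. It is routine algebra, but it is the only place the $t$-regime enters, and I would carry it out carefully: compute $q=\frac{n}{K-1}\bigl(\frac{2t}{3}-r\bigr)$ explicitly and simplify $q+r\le 1$ into the stated inequality. The remaining steps are direct substitutions.
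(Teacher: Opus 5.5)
Your proposal is correct and matches the paper's proof essentially verbatim: the same dual certificate $s(s-K+1)/(n(n-K+1))$, the same exchangeable extremizer supported on $\{0,K-1,n\}$, and the algebra you deferred does work out, since $q+r=t(n+2K-3)/(3(K-1))$. One small slip: $r/t=(n-2K+3)/(3(n-K+1))$ equals $1/3$ when $K=2$, so you should write $r/t\le 1/3$ rather than $<1/3$; this weaker inequality still gives $q\ge 0$.
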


\begin{proof}
The pair law gives
\[
\mathbb E S=\frac{2tn}{3},\qquad
\mathbb E[S(S-1)]=\frac{n(n-1)t}{3}.
\]

Consider
\[
q(s)=\frac{s(s-K+1)}{n(n-K+1)}.
\]
For every integer $s$ with $0\le s<K$ we have $q(s)\le 0$. For
$K\le s\le n$, the polynomial $q(s)$ is nonnegative and increasing, and
$q(n)=1$, so
\[
0\le q(s)\le 1.
\]
Hence
\[
\mathbf 1_{\{s\ge K\}}\ge q(s)
\]
for every integer $s=0,\dots,n$. Therefore
\[
\Pr(S\ge K)\ge \mathbb E q(S).
\]
Now
\[
\mathbb E q(S)
=
\frac{\mathbb E[S(S-1)]-(K-2)\mathbb E S}{n(n-K+1)}
=
\frac{t(n-2K+3)}{3(n-K+1)}.
\]

It remains to show sharpness. Define a distribution for $S$ supported on
$\{0,K-1,n\}$ by
\[
\Pr(S=n)=
\frac{t(n-2K+3)}{3(n-K+1)},
\]
\[
\Pr(S=K-1)=
\frac{nt(n-1)}{3(K-1)(n-K+1)},
\]
and
\[
\Pr(S=0)=1-\Pr(S=K-1)-\Pr(S=n).
\]
Since $2\le K<\frac{n+3}{2}$, the first two probabilities are nonnegative.
The condition
\[
t\le \frac{3(K-1)}{n+2K-3}
\]
is exactly the condition that $\Pr(S=0)\ge 0$.

Given $S=s$, choose uniformly an $s$-subset of $[n]$ and set those coordinates
equal to $1$. This produces an exchangeable distribution on $\{0,1\}^n$.

For this distribution,
\[
\mathbb E S=\frac{2tn}{3},\qquad
\mathbb E[S(S-1)]=\frac{n(n-1)t}{3}.
\]
Since the distribution is exchangeable, it follows that
\[
\Pr(x_i=1)=\frac{\mathbb E S}{n}=\frac{2t}{3}
\]
and
\[
\Pr(x_i=1,x_j=1)
=
\frac{\mathbb E[S(S-1)]}{n(n-1)}
=
\frac{t}{3}.
\]
Therefore
\[
\Pr(x_i=1,x_j=0)=\Pr(x_i=0,x_j=1)=\frac{t}{3},
\]
and
\[
\Pr(x_i=0,x_j=0)=1-t.
\]
Thus the required pair law holds.

Since the only support point at least $K$ is $n$,
\[
\Pr(S\ge K)=\Pr(S=n)
=
\frac{t(n-2K+3)}{3(n-K+1)}.
\]
This proves sharpness.
\end{proof}

\begin{corollary}
Let $K=\lceil sn\rceil$ with $0<s<1/2$. If $t$ is in the low-noise range of
Theorem~\ref{thm:pairlaw}, then as $n\to\infty$,
\[
\Pr(S\ge sn)\ge
\frac{1-2s}{3(1-s)}\,t+o(1).
\]
\end{corollary}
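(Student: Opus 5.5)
The plan is to derive the corollary directly from Theorem~\ref{thm:pairlaw} by choosing $K=\lceil sn\rceil$ and then passing to the limit. The only real work is checking the theorem's hypotheses for large $n$ and controlling the error term uniformly in $t$.

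First I verify the range condition $2\le K<\frac{n+3}{2}$ for all sufficiently large $n$. Since $s>0$, we have $K=\lceil sn\rceil\ge sn\ge 2$ once $n\ge 2/s$. Since $s<1/2$, we have $K\le sn+1<\frac{n+3}{2}$ as soon as $(\tfrac12-s)n>-\tfrac12$, which holds for every $n$. The low-noise condition $t\le \frac{3(K-1)}{n+2K-3}$ is assumed in the statement.

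Actually the low-noise condition is not needed for the inequality. The first half of the proof of Theorem~\ref{thm:pairlaw} (the bound $\mathbf 1_{\{s\ge K\}}\ge q(s)$ followed by taking expectations) uses only the moment identities, which hold for every admissible $t$; the restriction on $t$ enters only in the sharpness construction. Either way, we obtain
\[
\Pr(S\ge K)\ \ge\ \frac{t(n-2K+3)}{3(n-K+1)}.
\]
Since $S$ is integer-valued, $\{S\ge sn\}=\{S\ge \lceil sn\rceil\}=\{S\ge K\}$, so this is also a lower bound for $\Pr(S\ge sn)$.

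It remains to compare the ratio with its limit. Using $sn\le K\le sn+1$, the numerator satisfies $n-2K+3\ge (1-2s)n+1$ and the denominator satisfies $n-K+1\le (1-s)n+1$. Both bounds are positive, so
\[
\frac{n-2K+3}{n-K+1}\ \ge\ \frac{(1-2s)n+1}{(1-s)n+1}\ =\ \frac{1-2s}{1-s}+O\!\left(\frac1n\right),
\]
with the implied constant depending only on $s$. Multiplying by $t/3$ and using $0\le t\le 1$ gives
\[
\Pr(S\ge sn)\ \ge\ \frac{1-2s}{3(1-s)}\,t-O\!\left(\frac{1}{n}\right),
\]
where the $O(1/n)$ error is uniform in $t$ and hence is $o(1)$.

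The main obstacle, mild as it is, is handling the ceiling in $K$ and making the $o(1)$ uniform in $t$. The monotone bounds in the display above take care of both. It is also worth remarking that, because $\frac{(1-2s)n+1}{(1-s)n+1}\ge\frac{1-2s}{1-s}$ whenever $s\ge 0$, the $o(1)$ term can in fact be taken to be $0$ for $n\ge 2/s$.
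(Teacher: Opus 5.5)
Your proof is correct and follows the same route the paper takes: the paper gives no separate proof, and the corollary follows by substituting $K=\lceil sn\rceil$ into the bound $\frac{t(n-2K+3)}{3(n-K+1)}$ of Theorem~\ref{thm:pairlaw} and letting $n\to\infty$. Your two refinements are also valid. The low-noise hypothesis is needed only for the sharpness construction, not for the dual-certificate lower bound. And since $\frac{(1-2s)n+1}{(1-s)n+1}\ge\frac{1-2s}{1-s}$, the $o(1)$ term can indeed be dropped once $K\ge 2$.
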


\paragraph{Remark.}
The proof is an elementary instance of the linear programming method for
moment problems. We optimize the tail probability $\Pr(S\ge K)$ under
constraints on the first two factorial moments of $S$. The polynomial
\[
q(s)=\frac{s(s-K+1)}{n(n-K+1)}
\]
is a dual feasible certificate: it is dominated by the indicator
$\mathbf 1_{\{s\ge K\}}$ on $\{0,1,\dots,n\}$, while its expectation is
determined by the prescribed moments. The distribution supported on
$\{0,K-1,n\}$ is the matching primal extremizer. This is a finite-dimensional
version of the classical moment problem and of optimal probability
inequalities. For background on the classical moment problem, see Shohat and
Tamarkin \cite{ShoTam43} and Landau \cite{Lan87}; for the modern
convex-optimization viewpoint on sharp probability inequalities from moment
information, see Bertsimas and Popescu \cite{BerPop05}.

\subsection{Interpretation}

The corollary shows that even the pairwise information in \eqref{eq:pairlaw}
can force a substantial probability of observing a linear number of errors.

In particular, for every fixed $s<1/2$, the probability of seeing at least
$sn$ errors is bounded below by a constant multiple of $t$, in the low-noise
range of Theorem~\ref{thm:pairlaw}. Thus the error events cannot behave as if
they were nearly independent. Under independence with the same one-qubit
marginal, one would have
\[
\Pr(x_i=1,x_j=1)
=
\Pr(x_i=1)\Pr(x_j=1)
=
\frac{4t^2}{9},
\]
whereas the pair law gives
\[
\Pr(x_i=1,x_j=1)=\frac{t}{3}.
\]
Thus, for small $t$, the joint error probability is larger by a factor of
order $1/t$.

Moreover, if $2t/3<s$, then under independent errors with the same one-qubit
marginal the probability of the event $S\ge sn$ would be exponentially small
in $n$. By contrast, the pair law \eqref{eq:pairlaw} forces this event to have
probability of order $t$. This is the sense in which the pairwise correlations
force an error-synchronized component: with probability of order $t$, a linear
number of qubits may be hit simultaneously.

The trace-distance weight of such a bad event may be only of order $t$, but
the number of physical qubits affected in that event can be of order $n$.

\section{More background and related work} 
\label {s:back}

\subsection{On quantum advantage, or quantum supremacy}
\label{s:qa}

The term \emph{quantum supremacy}, now often replaced by the more neutral term
\emph{quantum advantage}, refers to a demonstration that a quantum device can
perform a computational task that is infeasible for classical computers. The
task need not be practically useful. The point is rather to exhibit a clear
separation between what can be done by a controlled quantum system and what
can be achieved by the best available classical computation.

There are two related but distinct meanings of this idea. The first is
asymptotic and theoretical: one gives a family of computational tasks and
argues, usually under complexity-theoretic assumptions, that no efficient
classical algorithm can perform the task, while an ideal quantum computer can.
The second is experimental and finite: one builds a particular quantum device,
runs a particular task, and argues that producing comparable samples or outputs
would be beyond the reach of existing or foreseeable classical computers.

The original and most famous theoretical example is Shor's algorithm for
factoring integers and computing discrete logarithms in polynomial time on a
quantum computer \cite {Sho94}. This is a compelling example of potential quantum
computational advantage, but it requires a large-scale fault-tolerant quantum
computer. It is therefore not a near-term demonstration.

Sampling tasks offer an even stronger form of quantum advantage from the viewpoint of computational complexity. Aaronson and Arkhipov \cite {AarArk13} and 
Bremner, Jozsa, and Shepherd \cite {BJS11} showed that the computational hardness of certain sampling tasks goes beyond the polynomial hierarchy ({\bf PH}). (In other words, solving such problems would be hard even for a classical computer equipped with a {\bf PH } oracle.)
Combining these results with the earlier result of Terhal and DiVincenzo \cite {TerDiV04}
shows that sampling tasks achieved by bounded depth quantum circuits already go beyond {\bf PH}.\footnote {Here we assume that {\bf PH} does not collapse.} 

Near-term proposals for quantum advantage using NISQ computers 
usually concern sampling tasks
rather than decision or search problems. Aaronson and Arkhipov's
proposal was based on sampling from
the output distribution of noninteracting photons in a linear-optical network.
Bremner, Jozsa, and Shepherd studied restricted commuting quantum circuits, often called IQP circuits. Random circuit sampling became a third central proposal: the task is to sample from the output distribution of a randomly chosen quantum circuit, with hardness arguments based on anti-concentration and average-case complexity assumptions.

Several experimental claims of quantum advantage have been made. In 2019,
Google reported \cite {Aru+19} a random-circuit-sampling experiment on its 53-qubit Sycamore
superconducting processor, claiming that the device performed a sampling task
far beyond the reach of classical supercomputers at the time. In 2020, the
USTC group reported \cite {Zho20} the photonic device Jiuzhang, based on Gaussian
BosonSampling, as another demonstration of quantum computational advantage.
Subsequent claims include superconducting random-circuit-sampling experiments
with the Zuchongzhi processors and further photonic experiments.

These experimental claims are necessarily more delicate than the theoretical
formulations. They depend not only on the performance of the quantum device,
but also on estimates of the best possible classical simulation algorithms, on
the statistical validation of the samples, and on assumptions about the noise.
Moreover, improvements in classical algorithms can change the interpretation
of an experiment after the fact.\footnote {The quantum advantage claims of the original papers regarding the Sycamore and Jiuzhang devices were largely refuted due to better classical algorithms, and in the case of the BosonSampling experiment a better classical algorithm is already offered by Kalai--Kindler's paper \cite {KalKin14}. However, the same groups and other groups offered later much stronger statements of quantum advantage that address also those better classical algorithms.}

\subsection {On correlated noise and fault tolerance}

Quantum error correction and quantum fault tolerance were partly motivated by
early skeptical arguments of Landauer, Unruh, and others concerning the
stability of quantum information. The discovery of quantum error-correcting codes and the proof of the threshold theorem \cite {AhaBen97,KLZ98,Kit97} were largely regarded as offering a good response to the issue of noise.  
After the proof of the threshold theorem, there were several views expressing concerns regarding correlated errors and several attempts to extend
fault-tolerance results to models with correlated errors. Let me mention, for example, Haroche and Raimond \cite {HarRai96} and Levin \cite {Lev03}. Preskill opined in 1998 \cite {Pre98} that NISQ\footnote {The term ``NISQ'' itself was coined by Preskill much later.} computers would be especially useful to study correlated noise, and this is very much also the theme of the present paper. 

Let me mention a few
developments that are especially close to my conjectures.

\paragraph {Reversible quantum computing, and quantum refrigerators} 
An early result about noisy reversible computation was proved in 1996 
by Aharonov, Ben-Or, Impagliazzo, and Nisan \cite{ABIN96}. They showed that reversible quantum computation (under standard noise models) reduces to log-depth quantum computation. 
Ben-Or, Gottesman, and Hassidim \cite {BGH13} considered fault-tolerant quantum computation in the context where there are no fresh ancilla qubits available during the computation, and where the noise is due to a general quantum channel. They identify noise channels for which polynomial time computation and even exponential time computation are possible.

\paragraph{Small error rates of arbitrary type.}
One observation that I made early on is that if the error rate measured in
terms of qubit-error counts is sufficiently low, then quantum fault-tolerance
methods can still allow logarithmic-depth quantum computation, even for quite
general types of errors. The argument is indicated in Section~5.3 of
\cite{Kal06}. The point is that, for logarithmic-depth circuits, an initially
local error can spread only within a limited light cone. Errors that defeat the
fault-tolerance scheme are therefore errors of large weight, and if such
large-weight errors occur with sufficiently small probability, logarithmic-depth
quantum computation can still be protected. 

This illustrates a recurring
theme: sufficiently weak or sufficiently restricted noise may allow (on paper) meaningful
forms of computation, even when fully scalable fault-tolerant quantum
computation is not available.  

\paragraph{Preskill's 2012 paper.}
John Preskill's paper \cite{Pre13}, partially triggered by my conjectures and
by earlier discussions we had on correlated errors, presented general sufficient
conditions for the threshold theorem to hold in the presence of correlated
Hamiltonian noise. These conditions require the terms in the noise Hamiltonian
that act collectively on many qubits, or on widely separated qubits, to decay
sufficiently rapidly.

These conditions are incompatible with the Fully Depolarizing Noise Conjecture.
Mathematically speaking, many small long-range interaction terms are not the
same as a substantial correlated fully depolarizing component on entangled
physical qubits. From my point of view, the very rapid decay required for
large-weight correlated errors is precisely the physically questionable part
of such general noise models.

\paragraph{Kalai--Kuperberg.}
One offshoot of my conjectures, particularly the notion of smoothed Lindblad
evolutions (see Section \ref {s:broader}), together with long email discussions with Greg Kuperberg, led to our joint paper \cite{KalKup15} \emph{Contagious error sources would need time travel to prevent quantum computation}. We proved that fault tolerance is
possible even under very general forms of ``disease-like'' spreading noise. The
proof relied on earlier ideas of Emanuel Knill, based on teleportation and the
conversion of quantum circuits to circuits of bounded quantum depth.

Greg Kuperberg viewed this paper as a successful response to my skepticism. I did not
quite see it that way. I saw our paper as clarifying an important point: if
time-smoothing in my proposed class of Lindblad evolutions 
applies only forward
in time, then fault tolerance can still succeed. To obstruct fault tolerance,
the smoothing would have to extend to both past and future, effectively
requiring a kind of ``time-travel'' correlation structure.

\paragraph{Fault-tolerant quantum sampling.}
Quantum sampling refers to probability distributions that can be described by
quantum circuits and provides some of the strongest complexity-theoretic
evidence for quantum advantage. One observation I made in the early 2010s, and
which was mentioned in Kalai--Kindler \cite{KalKin14}, is that quantum fault
tolerance would allow such sampling tasks to be implemented with extremely
small effective noise.

Under the assumptions of the threshold theorem, an ideal sampling circuit can
be replaced by a fault-tolerant circuit in which each logical qubit is encoded
using many physical qubits and each logical operation is protected by error
correction. By increasing the overhead, the total variation distance between
the noisy output distribution and the ideal output distribution can be made
exponentially small; in particular, each individual output probability can be
approximated with exponentially small additive error. For standard schemes the
overhead is polynomial in the relevant circuit parameters and in the desired
accuracy, although depending on the architecture and on finer properties of the
scheme, a quasi-polynomial overhead may sometimes be the more natural bound.
Thus, in the ideal fault-tolerant model, quantum sampling can in principle be
made essentially noiseless.

\paragraph{Remark: Formal verification.}
Threshold theorems are mathematical theorems, and as such they can in
principle be checked without relying on unknown experimental facts. Still,
putting mathematical claims under serious scrutiny is a difficult and often
thankless task. Since billion-dollar industries and large public investments
depend on claims about fault tolerance, these claims deserve an especially high
level of mathematical scrutiny.

It would be valuable to formalize some central threshold-theorem arguments in
modern proof-assistant platforms, with particular attention to the constants
and to the precise assumptions on the noise model. Simulations are also
valuable, and to a large extent are already being carried out. But formal
verification could play a complementary role: it could clarify exactly which
mathematical assumptions are used, where constants enter, and how the
conclusions depend on the locality and correlation structure of the noise.\footnote{We note that there is a recent paper that built a Lean library for the formalization of many concepts of quantum information \cite {KTM+26}.}

\subsection{Interpretation}

What is the right interpretation of Shor's algorithm? One possible reaction to
Shor's theorem is that it points toward a marvelous future technology:
large-scale quantum computers capable of factoring integers and solving other
classically difficult problems. Another possible reaction is that Shor's theorem
exposes a tension between the ideal quantum circuit model and physical reality.
From this point of view, the theorem is not only a promise, but also a warning
sign: perhaps the model allows computational phenomena that physical systems
cannot realize with sufficient accuracy.

This was close to the reaction of Leonid Levin \cite{Lev03}. Levin compared
Shor's algorithm to Shamir's factoring algorithm in the unit-cost arithmetic
model \cite{Sha79}. In that model one can factor integers very quickly, but
only because the model allows arithmetic operations on very large integers at
unit cost. The lesson is not that factoring is easy in the physical world, but
that a mathematically natural model may contain unrealistic assumptions about
the physical cost of computation. From this perspective, Shor's theorem may be
read either as a promise of future technology or as evidence that the ideal
quantum circuit model is too strong as a physical model.

Quantum sampling raises the same interpretive question in an even sharper
form. From the viewpoint of computational complexity, sampling tasks provide a
stronger form of quantum advantage than factoring. In the exact setting, the
relevant sampling tasks are not merely believed to be hard for ordinary
classical computation; efficient classical exact sampling, even with very
powerful classical resources, would imply unexpected collapses of the
polynomial hierarchy. Thus quantum sampling appears to go beyond \({\bf PH}\)
in a way that factoring does not.

The bounded-depth case makes the question sharper still. The work of Terhal
and DiVincenzo, together with later sampling-hardness results, showed that the
complexity-theoretic power associated with quantum sampling is not confined to
long quantum computations. It already appears in bounded-depth or very shallow
quantum circuits, provided one uses the appropriate adaptive or sampling
formulations. Should one believe that even bounded-depth quantum devices can
realize computational phenomena beyond the polynomial hierarchy, or should this
strengthen the suspicion that the idealized model is missing essential physical
limitations?

Fault tolerance adds another layer to the same question. What is the right
interpretation of the threshold theorem? One interpretation is that it provides
the crucial theoretical step toward building large-scale quantum computers: if
the noise rate is sufficiently small, and if the correlations in the noise are
sufficiently controlled, then arbitrarily long quantum computations become
possible. But there is another possible interpretation. One may wonder whether
the assumptions themselves are too optimistic: whether physics really allows
sufficiently low noise rates together with the required control of correlated
errors.

A related question arises from reversible computation. It follows from the result of Aharonov, Ben-Or, Impagliazzo, and
Nisan \cite {ABIN96} that even reversible quantum computation already allows factoring with quasipolynomial resources. Should this be regarded as another reason for optimism, or rather as an indication that the noise assumptions implicit in the ideal model are too optimistic?

Moving from theory to experiments, we can ask the same question about experimental claims. Since
2019 we have seen claims that noisy quantum devices performed sampling tasks
that would require, according to various estimates, thousands of supercomputer
years, then billions of years, then \(10^{25}\) years \cite {Google25c}. Most recently, in the
Jiuzhang 4.0 photonic Gaussian Boson Sampling experiment, the reported
comparison was \(25.6\,\mu{\rm s}\) for producing a sample on the quantum
device at room temperature versus more than \(10^{42}\) years for a state-of-the-art classical tensor-network simulation on the El Capitan supercomputer \cite {Liu+25}. What is the rational interpretation of such claims?


One possible interpretation is that these experiments are early manifestations
of the extraordinary computational power predicted by quantum mechanics and
complexity theory. Another possible interpretation is that the comparison
itself has become detached from the physical and statistical realities of noisy
experiments. The larger and more spectacular the claimed separations become,
the more important it is to ask whether the assumptions behind the comparison
have been independently validated.

\end {document}